\newcommand{\comment}[1]{}
\newcommand{\greedy}{\textsc{Greedy}}
\newcommand{\greedyplus}{\textsc{Greedy+Singleton}}
\newcommand{\enum}{\textsc{EnumGreedy}}
\newcommand{\eps}{{\varepsilon}}
\newcommand{\aeps}{{\mu}}
\DeclareMathOperator*{\argmax}{arg\,max}
\begin{document}
\newtheorem{thm}{Theorem}[section]
\newtheorem{prop}[thm]{Proposition}
\newtheorem{assm}[thm]{Assumption}
\newtheorem{lem}[thm]{Lemma}

\newtheorem{obs}[thm]{Observation}
\newtheorem{cor}[thm]{Corollary}
 \newtheorem{lemma}[thm]{Lemma}
 \newtheorem{proposition}[thm]{Proposition}
 \newtheorem{claim}[thm]{Claim}
\newtheorem{defn}[thm]{Definition}
\newcommand{\ariel}[1]{{\color{red} (Ariel :#1)}}
\def \II   {{\mathcal I}}
\newcommand{\one}{\mathbbm{1}}

%\begin{titlepage}
\title{
%	A Faster %$(1-e^{-1})$-
%	Tight Approximation for Submodular Maximization Subject to a Knapsack Constraint
A Refined Analysis of 
%the 
Submodular Greedy
}
\author[1]{Ariel Kulik\thanks{\texttt{kulik@cs.technion.ac.il}}}
\author[1]{Roy Schwartz\thanks{\texttt{schwartz@cs.technion.ac.il}}}
\author[1]{Hadas Shachnai\thanks{\texttt{hadas@cs.technion.ac.il}}}
\affil[1]{Computer Science Department, Technion, Haifa 3200003, Israel}
%\author{Ariel Kulik \and Roy Schwartz \and Hadas Shachnai}
\maketitle
%\end{titlepage}
\begin{abstract}
\comment{
We present a $(1-e^{-1})$-approximation algorithm for maximizing a monotone submodular function subject to a knapsack constraint
in $O(n^4)$ function value computations. This improves the best known running time of $O(n^5)$ required by a $(1-e^{-1})$-approximation algorithm for the problem due to [Sviridenko, 2004].

Sviridenko's algorithm exhaustively iterates over all subsets of at most three elements and extends each using a greedy approach.
We improve the running time via a refined analysis of the greedy approach which shows that 
exhaustively iterating over all subsets of at most two elements suffices for obtaining the same approximation ratio.
}

\comment{
The problem of maximizing a monotone submodular function subject to a knapsack constraint admits
 a tight  $(1-e^{-1})$-approximation: exhaustively 
 enumerate over all subsets of size at most three and extend each using
 the 
  greedy heuristic [Sviridenko, 2004].
{We present a novel refined analysis of the greedy heuristic which proves that it suffices to enumerate only over all subsets of size at most two and still retain  a tight 
 $(1-e^{-1})$-approximation.
%We prove  it suffices to enumerate only over all subsets of size at most two and still retain a tight 
 %$(1-e^{-1})$-approximation.
 This improves the running time from $O(n^5)$ to $O(n^4)$ queries.
 %The result is achieved
 %We  achieve this result 
 %via a refined analysis of the  greedy heuristic.
}

 }

Many algorithms for maximizing a monotone submodular function subject to a knapsack constraint rely on the natural greedy heuristic.
We present a novel refined analysis of this greedy heuristic which enables us to:
$(1)$ reduce the enumeration in the tight $(1-e^{-1})$-approximation of [Sviridenko 04] from subsets of size three to two;
$(2)$ 
present an improved upper bound of $0.42945$  for the classic algorithm which returns the better between 
a single element and the output of the greedy heuristic.

\comment{
provide upper bounds on classic heuristics for the problem.  

$(1)$ it suffices to enumerate over subsets of size at most two and extend each using the greedy heuristic to obtain a tight $(1-e^{-1})$-approximation for the problem ;

$(2)$ }
\end{abstract}

\noindent{\bf Keywords: }Submodular functions, Knapsack constraint, Approximation Algorithms

\section{Introduction}

Submodularity is a fundamental mathematical notion that captures the concept of economy of scale and is prevalent in many areas of science and technology.
Given a ground set $E$, a set function $f:2^E \to \mathbb{R}$ over $E$ is called \emph{submodular} if it has the \emph{diminishing returns} property:
$f(A \cup \{e\}) - f(A) \geq f(B \cup \{e\}) - f(B)$ for every $A \subseteq B \subseteq E$ and $e \in E \setminus B$.\footnote{
	An equivalent definition is: $f(A) + f(B) \geq f(A \cup B) + f(A \cap B)$ for any $A,B \subseteq E$.
}
Submodular functions naturally arise in different areas such as combinatorics, graph theory, probability, game theory, and economics.
Some well known examples include coverage functions, cuts in graphs and hypergraphs, matroid rank functions, entropy, and budget additive functions.

A submodular function $f$ is monotone if $f(S) \leq f(T)$ for every $S \subseteq T \subseteq E$.
In this note we consider the problem of {\em maximizing a monotone submodular function subject to a knapsack constraint} (MSK).
An instance of the problem is a tuple $(E,f,w,W)$ where 
$E$  is a set of $n$ elements, $f:2^{E}\rightarrow \mathbb{R}_{\geq 0}$ is a non-negative, monotone
  and submodular set function given by a {\em value oracle}, $w:E \rightarrow \mathbb{N}_+$ is a  weight function  over the elements, and $W\in \mathbb{N}$ is the knapsack capacity.\footnote{We use $\mathbb{N}$ to denote the set of non-negative integers, and $\mathbb{N}_+=\mathbb{N}\setminus\{0\}$.} A subset $S\subseteq E$ is {\em feasible} if $\sum_{e\in S} w(e) \leq W$, i.e., the total weight of elements in $S$ does not exceed the capacity $W$; the value of $S\subseteq E$ is $f(S)$. The objective is to find a feasible subset $S\subseteq E$ of maximal value.

MSK arises in many applications. Some examples include sensor placement \cite{LKGFVG07}, document summarization \cite{LB10}, and network optimization \cite{SKYK11}.
The problem is a generalization of monotone submodular maximization with a cardinality constraint (i.e., $w(e)=1$ for all $e\in E$), for which a simple greedy algorithm yields a $(1-e^{-1})$-approximation \cite{NWF78}. 
This is the best ratio which can be obtained in 
%approximation ratio is the best 
%obtainable in 
polynomial time in the oracle model~\cite{NW78}. The approximation ratio of $(1-e^{-1})$ is also optimal in the special case of 
coverage functions  
%(see, e.g., \cite{KMS99}) 
under $P\neq NP$ \cite{Fe98}.

Many algorithms for MSK rely on a natural greedy heuristic. Greedy
%which 
%Greedy maintains %
%maintain
%The greedy heuristic 
maintains a feasible subset $A\subseteq E$. In each step it adds to $A$ an element $e\in \{e'\in E~|~w(A\cup\{e'\})\leq W\}$ which maximizes $\frac{f\left(A\cup\{e\}\right)-f (A)}{w(e)}$.\footnote{For a set $A\subseteq E$ we use $w(A)=\sum_{e\in A} w(e)$.} While the greedy heuristic does not guarantee any constant approximation ratio, it is commonly utilized as a procedure within approximation algorithms.

The first $(1-e^{-1})$-approximation for MSK was given by Sviredenko \cite{Sv04} as an adaptation of an algorithm of Khuller, Moss and Naor \cite{KMS99} proposed for the special case of coverage functions. 
The algorithm of Sviridenko exhaustively enumerates  (iterates) over all subsets $G\subseteq E$ of at most $3$ elements and extends each set $G$ using the greedy heuristic.
% The greedy procedure
 The algorithm uses $O(n^5)$ oracle calls and arithmetic operations.

Several  works were dedicated to the development of  simple, fast and greedy-based algorithms for MSK, with approximation ratios strictly smaller than  $(1-e^{-1})$.  
Special attention was given to the algorithm which returns the better solution between the single element of highest value and the result of the greedy heuristic, to which we refer as $\greedyplus$ (see the pseudocode in Section \ref{sec:plus}).

The algorithm was first suggested in \cite{KMS99} for coverage functions, and adapted to monotone submodular function in \cite{LB10}. Both
works stated an approximation guarantee of $(1-e^{-0.5})$, though the proofs in both works were flawed. 
A correct proof  for a $(1-e^{-0.5})$-approximation was given by Tang et. al.  \cite{tang2021revisiting},
improving  upon an earlier approximation guarantee of  $\frac{e-1}{2e-1}\approx 0.387$  by Cohen and Katzir \cite{cohen2008generalized}.  Recently, Feldman, Nutov and Shoham~\cite{FNS20} showed that the approximation ratio of the algorithm is within $[0.427,0.462]$.

A recent work by  Yaroslavtsev, Zhou and  Avdiukhin \cite{yaroslavtsev2020bring} gives an $O(n^2)$ greedy based algorithm with an
approximation guarantee of $\frac{1}{2}$. A variant of the algorithm of Yaroslavtsev el. al. was used by Feldman et. al. \cite{FNS20} to derive a greedy based $0.9767\cdot(1-e^{-1})$-approximation in $O(n^3)$ oracle calls.

Taking a more theoretical point of view, in \cite{EN19} Ene and Nguyen presented a $(1-e^{-1}-\eps)$-approximation for MSK in time  $O(n \cdot \log^2{n})$ for any fixed $\eps>0$, improving upon an earlier  $O(n^2\cdot \text{polylog} (n))$ algorithm with the same approximation ratio due to Badanidiyuru and Vondr{\'a}k \cite{BV14}. We note, however, that the dependence of the running times of  these algorithms on $\eps$ renders them purely theoretical.
%To date, the algorithm of \cite{Sv04} has the best running time for a $(1-e^{-1})$-approximation for MSK.  
%\ariel{I think this line should be removed}
%I agree. --H
%%\ariel{I am not sure it is the right result to state}
%%Looks fine to me. --H

Our main technical contribution is a tighter  analysis of the greedy heuristic, presented in Section~\ref{sec:greedy_alg}. 
We show two applications of the analysis.
In the first application we consider a variant of the algorithm in \cite{Sv04} which only enumerates of subsets of size at most two (as opposed to three in \cite{Sv04}),  and show it retains the tight approximation ratio of $(1-e^{-1})$ for MSK.
%The first application is that it suffices to enumerate over all subsets of size at most two, as opposed to subsets of size at most three \cite{}, and extend each subset using the greedy heuristic and still retain the tight $(1-e^{-1})$-approximation for MSK.
%The first application is a  $(1-e^{-1})$-approximation for MSK with an improved running time.
\begin{thm}
\label{thm:main}
There is a $(1-e^{-1})$-approximation for MSK using $O(n^4)$ value oracle calls and arithmetic operations which works as follows: enumerate over all subsets of size at most two and extend each using the greedy heuristic.
\end{thm}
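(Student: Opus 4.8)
The plan is to distill the theorem into a single structural statement about \greedy{} — a \emph{refined greedy lemma} — and then read off both the approximation ratio and the running time from it. Setup and reduction: fix an optimal feasible solution $O$ and put $\OPT=f(O)$; we may assume $|O|\ge 3$, since otherwise $O$ itself is among the enumerated subsets and returning it (or greedily extending it, which only helps by monotonicity) is already optimal. Order the elements of $O$ greedily with respect to $f$: let $o_1\in\argmax_{e\in O} f(\{e\})/w(e)$ and inductively $o_{i+1}\in\argmax_{e\in O\setminus\{o_1,\dots,o_i\}}\big(f(\{o_1,\dots,o_i,e\})-f(\{o_1,\dots,o_i\})\big)/w(e)$; by submodularity the marginal densities along this order are non-increasing, so $G^\star:=\{o_1,o_2\}$ carries the two largest. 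Since the algorithm enumerates \emph{every} subset of size at most two, one of its iterations runs \greedy{} starting from $A=G^\star$; let $S$ be that iteration's output. As the algorithm returns the best set over all iterations and \greedy{} never violates the knapsack, it suffices to prove $f(S)\ge(1-e^{-1})\OPT$, i.e. the refined greedy lemma: \emph{extending $G^\star$ by \greedy{} yields $f(S)\ge(1-e^{-1})f(O)$} (this is exactly the content of the analysis in Section~\ref{sec:greedy_alg}).

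To prove the lemma I would follow the density-domination line, but pushed one guessed element beyond \cite{Sv04}. Write the \greedy{} trajectory as $G^\star=A_0\subsetneq A_1\subsetneq\cdots\subsetneq A_m=S$, with $a_i$ added in step $i$. The key inequality is: whenever $A$ is an intermediate set for which every $e\in O\setminus A$ still fits, i.e.\ $w(e)\le W-w(A)$, the element $a$ chosen by \greedy{} obeys $\frac{f(A\cup\{a\})-f(A)}{w(a)}\ \ge\ \frac{f(A\cup O)-f(A)}{W}\ \ge\ \frac{\OPT-f(A)}{W}$, using that the chosen density dominates the weighted average density over $O\setminus A$ (submodularity), monotonicity $f(A\cup O)\ge\OPT$, and $w(O\setminus A)\le w(O)\le W$. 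Telescoping $\OPT-f(A_{i+1})\le(\OPT-f(A_i))\big(1-\tfrac{w(a_{i+1})}{W}\big)$ across the longest prefix $A_0,\dots,A_r$ on which the feasibility condition holds, and using $1-x\le e^{-x}$, gives $\OPT-f(A_r)\le(\OPT-f(G^\star))\prod_{i=1}^{r}\big(1-\tfrac{w(a_i)}{W}\big)\le\OPT\cdot e^{-(w(A_r)-w(G^\star))/W}$.

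What remains — and this is where I expect the real work, and the reason two guesses (rather than three) suffice — is the endgame: closing the gap between $A_r$ and $(1-e^{-1})\OPT$. At step $r{+}1$ the feasibility condition first fails, either because \greedy{} halts (its preferred element is too heavy) or because some element of $O$ has just become infeasible; in both cases $w(A_r)$ is within a single element's weight of $W$, and one argues that adjoining that one offending (possibly infeasible) element to $A_r$ would already reach value $\ge(1-e^{-1})\OPT$, while its marginal — which \greedy{}'s step $r{+}1$ over-counts and which $G^\star$ was chosen to dominate — is small enough that $f(A_r)=f(S)$ itself meets the bound. The classical accounting loses a $\tfrac1{j+1}$-fraction of $\OPT$ per guessed element and hence forces $j=3$; replacing it with the tight exponential telescoping above lets $j=2$ absorb the final marginal, and the delicate point is controlling that marginal when the offending element lies \emph{outside} $O$, via the greedy ordering on $O$. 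Finally, the running time: one \greedy{} run does at most $n$ rounds, each costing at most $n$ oracle calls for marginals plus $O(n)$ weight bookkeeping, i.e.\ $O(n^2)$; over the $\binom{n}{2}+n+1=O(n^2)$ enumerated seeds this is $O(n^4)$ oracle calls and arithmetic operations, the factor-$n$ saving over \cite{Sv04} being precisely the drop from $\binom{n}{3}$ to $\binom{n}{2}$ seeds.
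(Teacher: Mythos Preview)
Your high-level skeleton (enumerate, focus on the seed $G^\star=\{y_1,y_2\}$, analyze greedy from there, count $O(n^2)$ seeds times $O(n^2)$ per greedy) is right, but the heart of the argument is missing. The ``tight exponential telescoping'' you write down,
\[
\OPT-f(A_{i+1})\ \le\ (\OPT-f(A_i))\Bigl(1-\tfrac{w(a_{i+1})}{W}\Bigr),
\]
is exactly the classical Sviridenko bound (the single-set case $k=1$ of Lemma~\ref{lem:bound}). It is not a refinement of it, and it is precisely this bound that \emph{forces} three guesses. Concretely, what the classical analysis buys you is the existence of $e\in Y\setminus G$ with $f_G(A\cup\{e\})\ge(1-e^{-1})f_G(Y)$, hence
\[
f(G\cup A)\ \ge\ f(G)+(1-e^{-1})f_G(Y)-f_G(e).
\]
With $|G|=2$ (and the paper's marginal-\emph{value} ordering) one only has $f_G(e)\le\tfrac12 f(G)$, giving $\tfrac12 f(G)+(1-e^{-1})f_G(Y)$, which in general falls short of $(1-e^{-1})f(Y)$ because $1-e^{-1}>\tfrac12$. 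Your endgame paragraph asserts the gap closes but never exhibits an inequality that closes it; ``$G^\star$ was chosen to dominate the marginal'' is exactly the $f_G(e)\le\tfrac12 f(G)$ bound, and that is not enough.

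The actual new idea in Section~\ref{sec:greedy_alg} is to compare greedy not to $Y\setminus G$ as a single block but to a \emph{two-piece} partition $X_1,X_2$ of $Y\setminus G$ (namely $\{e^*\}$ and $R=Y\setminus G\setminus\{e^*\}$, ordered by density), and to use the corresponding bounding function $h$ of Definition~\ref{def:bounding}. This is what captures the tension your proposal never exploits: if $f_G(e^*)$ is close to $\tfrac12 f(G)$ (the bad case for the marginal subtraction), then the two-piece $h$ shows greedy earns strictly more than the naive $(1-e^{-1})f_G(Y)$, and the surplus exactly compensates. The proof of Lemma~\ref{lem:enum2} splits into three cases according to whether $W-w(G)-w(e^*)$ lies above or below the breakpoint $D_1$ of $h$, and whether $X_1=\{e^*\}$ or $X_1=R$. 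A secondary issue: you order $O$ by marginal \emph{density}, whereas the paper orders by marginal \emph{value}; the value ordering is what yields the inequality $f(G)\ge 2f_G(e^*)$ used throughout the case analysis.
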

%To prove the result we use a simple variant of 
% the algorithm of \cite{Sv04} which		 only enumerates over sets $G$ of up to $2$ elements, essentially showing 
%one can save in the enumeration step of~\cite{Sv04}.

Let us now briefly elaborate on the insight we use in order to improve the analysis of \cite{Sv04}.
Intuitively, the analysis in \cite{Sv04} bounds the value of  the solution generated by the greedy phase assuming a worst case submodular function $f$, and then bounds the value loss  due to a discarded element (the element is discarded by the analysis, not by the algorithm) assuming a worst case submodular function $g$. 
The main insight for our improved result is that $g\neq f$; that is, there is no function which 
attains simultaneously the worst cases assumed in \cite{Sv04} for the outcome of greedy and for the value loss due to the discarded element. This insight is well captured by the refined analysis. 
%Based on this insight, we give in Section \ref{sec:greedy_alg} a refined analysis of the  greedy phase.
%The refined analysis is the key for 
The proof of Theorem~\ref{thm:main} is given in Section~\ref{sec:enum}. %We believe our refined analysis may find additional applications. 

In the second application we
utilized the refined analysis to improve the best known upper bound on the approximation ratio of $\greedyplus$.  

 \begin{thm}
	\label{thm:gs_bound} The approximation ratio of $\greedyplus$ is no greater than $\beta= 0.42945$.
\end{thm}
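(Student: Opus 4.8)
Since $\beta$ is asserted as an \emph{upper} bound on the approximation ratio, Theorem~\ref{thm:gs_bound} is a hardness statement about the algorithm itself, so the plan is constructive: for every $\delta>0$ we exhibit a monotone submodular instance $(E,f,w,W)$ on which $\greedyplus$ returns a set of value at most $(\beta+\delta)\cdot\OPT$. I would realize $f$ as a weighted coverage function on a continuous ground set, take $W$ large, and discretize at the very end to obtain a legitimate value-oracle instance with integer weights; the worst ratio is approached as $W\to\infty$. The construction is steered by the refined analysis of Section~\ref{sec:greedy_alg}, which tracks the value \greedy\ accumulates as a function of the weight it has consumed (via the curve $\klf$) and bounds $\OPT$ in terms of $\klf$ together with the element the analysis discards. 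The hard instance is the one that simultaneously $(i)$ makes \greedy\ run along the trajectory that is worst for this bound and $(ii)$ keeps \emph{every} single element cheap, so that neither branch of $\greedyplus$ succeeds.

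Two opposing requirements organize the design, and forcing them to meet is where the constant comes from. On one side, since $\sum_{o\in O}f(\{o\})\ge f(O)=\OPT$ for an optimal set $O$, the best single element is worth at least $\OPT/|O|$; to drive the singleton branch below $\beta\cdot\OPT$ the set $O$ must be split into several pieces, at least one of which --- call it $o^\star$ --- has weight a $\theta$ fraction of $W$ with $\theta$ bounded away from $0$. Such a heavy element is also needed for a second reason: were all elements of $O$ light, they would remain feasible to add throughout the greedy run and the standard density argument would force $f(\greedy)\ge(1-e^{-1})\OPT$, far above $\beta\cdot\OPT$; a heavy $o^\star$ makes $O$ leave greedy's reach once the consumed weight exceeds $(1-\theta)W$, capping the guaranteed value at $\bigl(1-e^{-(1-\theta)}\bigr)\OPT$. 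On the other side, the coverage of $o^\star$, of the light part of $O$, and of the low-coverage ``decoy'' elements that \greedy\ actually selects must be interleaved so that \greedy\ never gains more than $\beta\cdot\OPT$ (this forces roughly $\theta\ge 1+\ln(1-\beta)$ and, less obviously, forbids simply making the remaining mass of $O$ invisible to \greedy, since that would pull $f(O)$ below $\OPT$), while keeping $f(O)=\OPT$ and $f(\{o^\star\})<\beta\cdot\OPT$. Concretely I would: fix the skeleton (weights, the parameter $\theta$, and the decoy/coverage profile --- most cleanly taken as a continuum, so the profile is a function); check that every greedy density comparison and tie-break goes as intended and compute $f(\greedy)$; bound $\max_{e}f(\{e\})$ over the heavy element, the light $O$-elements, the decoys and the discarded element; invoke the refined analysis to certify that no submodular completion of the skeleton raises $\OPT$ above the computed value; and finally optimize the free parameters, a low-dimensional but transcendental optimization whose solution is $\beta=0.42945$, and pass to the limit.

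The principal obstacle is joint realizability: one must produce an explicit assignment of measures to the atoms of the Venn diagram of all the relevant sets (equivalently, an explicit coverage realization) that makes $f$ monotone and submodular while meeting the several incompatible-looking demands above --- decoys denser than the light $O$-elements at \emph{every} step, decoys of tiny total coverage, a heavy element cheap on its own, and $f(O)=\OPT$. The ``obvious'' skeleton is in fact inconsistent (as the partial accounting above already hints), so the real work is to thread the needle with a genuinely interleaved profile and then solve the resulting variational/transcendental problem to extract $0.42945$ and to confirm that the ratio is actually attained in the limit, not merely bounded. The refined analysis of Section~\ref{sec:greedy_alg} is what makes the certification step --- that the adversary cannot do better on this skeleton --- manageable, since it replaces an optimization over all monotone submodular functions by one over the few parameters describing the trajectory curve $\klf$.
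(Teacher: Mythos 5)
Your plan is the right kind of argument and matches the paper's approach in spirit: the paper's proof is exactly a constructive hardness instance built from a coverage function ($f(S)=\mu\bigl(\bigcup_{I\in S}I\bigr)$ for Lebesgue measure $\mu$), a three-element optimum $\{X,Y,Z\}$ with $f(\{X,Y,Z\})=1=W$, and a family of tiny decoy elements $I_1,\dots,I_{k_2}$ that lure \greedy\ onto a low-value trajectory while keeping every singleton below $\beta$. So the outline --- coverage realization, decoys, heavy optimal elements, numerically optimized parameters, pass to a limit in a granularity parameter $\eps$ --- is the one the paper actually uses.

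However, two of the load-bearing structural choices are missing or slightly off, and without them the constant does not come out. First, your narrative is built around a single heavy element $o^\star$ of weight $\theta W$ that ``leaves greedy's reach,'' with the rest of $O$ light. The paper's instance instead has \emph{two} equally heavy elements, $w(X)=w(Y)=0.4584$, plus one light element $Z$ with $w(Z)=0.0832$. Both $X$ and $Y$ must be too heavy to fit once the decoys and $Z$ are selected; if only one element of $O$ were heavy, \greedy\ could still pick up the remaining light mass of $O$ and would retain far more of $\OPT$, so the one-heavy-element skeleton cannot reach $0.42945$. Relatedly, the singleton bound is achieved precisely because $f(\{X\})=f(\{Y\})=0.42943<\beta$, i.e., the heavy elements themselves are barely below the threshold --- this coupling is what pins down the constant and is not visible from the ``$\OPT/|O|$'' estimate alone. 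Second, the decoy profile needs to be concrete and is nontrivial: the paper runs it in two phases. The first $k_1$ decoys are subsets of $Z$ chosen so that, as they are consumed, $Z$'s marginal density falls from $\mu(Z)/w(Z)$ to $\mu(X)/w(X)$; the next $k_2-k_1$ decoys intersect $X,Y,Z$ in proportion to their weights, so that all three densities stay equal and decrease in lockstep until $\rho=0.62233$ of $X$'s density remains. This is what keeps \greedy\ on the decoy trajectory the whole way, makes it pick $Z$ last, and leaves it with total weight $>1-w(X)$ so neither $X$ nor $Y$ fits, yielding $f(A)=1-2\rho\,w(X)<\beta$. Your proposal flags the ``joint realizability'' obstacle correctly, but the fix is precisely this two-heavy-plus-one-light optimum and the two-phase proportional-coverage decoy profile, neither of which your sketch reaches. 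Finally, a minor point: the MSK definition requires integer weights, so the discretization you defer is genuinely needed (via scaling), but that is routine once the real-weighted instance above is pinned down.
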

 The proof of the above theorem is given in Section \ref{sec:plus}.  The result was obtained by generating an instance for which the guarantee of the refined analysis  is poor via numerical optimization. Combined with the result of \cite{FNS20}, the theorem limits the approximation ratio of $\greedyplus$ to the narrow interval $[0.427,0.4295]$. 
 %$[0.427,0.43]$. 

We note that our  observation, which states that it suffices to enumerate only over all subsets of size at most two, was also independently and in parallel obtained by Feldman, Nutov and Shoham \cite{FNS20}.
However, our proof and the proof  of \cite{FNS20} differ significantly.
Our approach for proving the main 
observation is useful for constructing
counter examples for the approximation 
of the greedy heuristic, as demonstrated in Theorem \ref{thm:gs_bound}.

%A notable example
 %is an upper bound of $\approx 0.43$ on the approximation ratio of the algorithm 
 %which returns the better solution between the single element of highest value and the greedy heuristic.

%\section{Notation}

\section{The Greedy Procedure}

We start with some definitions and notation.
Given a monotone submodular function $f:2^E\rightarrow \mathbb{R}_{\geq 0}$ and $A\subseteq E$, we define the function $f_A:2^E\rightarrow \mathbb{R}_{\geq 0}$ by $f_A(S)= f(A\cup S)-f(A)$ for any $S\subseteq E$. It is well known that $f_A$ is also  monotone, submodular and non-negative (see, e.g., Claim 13 in \cite{FKNRS20}).  We also use $f(e)=f(\{e\})$ for $e\in E$.  
\label{sec:greedy_alg}
\begin{algorithm}[h]
	\SetAlgoLined
	%\SetKwInOut{Configuration}{Configuration}
	\SetKwInOut{Input}{Input}\SetKwInOut{Output}{output}
	\Input{An MSK instance $(E,f,w,W)$}
	%\Configuration{The parameters $N,P, \delta$ and $\mu$.}
	%{\bf Input:} 
	%\Input{An SMKP instance $\II =(I,w,B,W,f)$, a parameter $\aeps>0$,  partition $(B_j)_{j=0}^\ell$ of the bins to blocks and subsets $(S_j)_{j=0}^\ell$ of $I$.}
	\DontPrintSemicolon
	
	%{\bf Input}: SMKP instance (I=\{1,\ldots, n\}, (w_i)_{i\in I} B=\{1,\ldots, n\}, (W_b)_{b\in B}, f)} \;
	
	Set $E' \leftarrow E$ and $A\leftarrow \emptyset$ \;
	
	\While{ $E'\neq \emptyset$ \label{greedy:loop}}{ 
		
		Find $e\in E'$ such that $\frac{f_A(e)}{w(e)}$ is maximal.
		\label{greedy:considered}
		\;
		
		Set $E' \leftarrow E' \setminus \{e\}$.\;
		
		If $w(A\cup \{e\})\leq W$ set $A\leftarrow A\cup \{e\}$. \label{greedy:selected}\; 		
	}
	Return $A$ 
	\;
	\caption{$\greedy(E,f,w,W)$} 
	\label{alg:greedy}
\end{algorithm}

The greedy procedure is given in Algorithm \ref{alg:greedy}.  
While the procedure is useful for deriving efficient approximation, as a stand-alone algorithm it
does not guarantee any constant approximation ratio. 
We say that the element $e\in E$ found in Step \ref{greedy:considered} 
is {\em considered} in the specific iteration of the loop in Step \ref{greedy:loop}.
Furthermore, if the element was also added to $A$ in Step \ref{greedy:selected} we say it was {\em selected} in this iteration. 
%If an element $e$ was added to $A$ in Step \ref{greedy:selected} in some iteration of the loop in Step~\ref{greedy:loop}, we say the element was {\em selected} in this iteration.

\begin{lemma}
	\label{lem:greedy_feasible}
	For any MSK instance $(E,f,w,W)$, Algorithm~\ref{alg:greedy} returns a feasible solution for $(E,f,w,W)$.
\end{lemma}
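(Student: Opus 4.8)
The plan is to establish a loop invariant for the set $A$ maintained by Algorithm~\ref{alg:greedy}, namely that $A$ is always feasible (i.e., $w(A)\le W$), together with the fact that the loop terminates. First I would note that the loop terminates: every iteration executes the step $E'\leftarrow E'\setminus\{e\}$ for the element $e$ found in Step~\ref{greedy:considered}, so $|E'|$ strictly decreases; hence after at most $|E|=n$ iterations we have $E'=\emptyset$ and the algorithm returns $A$.

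Next I would prove by induction on the iterations of the while loop that $w(A)\le W$ holds at the start of every iteration (and therefore also when the algorithm returns). The base case is $A=\emptyset$, for which $w(A)=0\le W$ since $W\in\mathbb{N}$. For the inductive step, consider an iteration beginning with a feasible $A$; the set $A$ is modified only in Step~\ref{greedy:selected}, and only when the guard $w(A\cup\{e\})\le W$ holds. If the guard fails, $A$ is unchanged and remains feasible; if it holds, the updated set $A\cup\{e\}$ satisfies $w(A\cup\{e\})\le W$ by the guard itself. In both cases $A$ is feasible at the end of the iteration.

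Combining the two observations, the set $A$ returned by the algorithm satisfies $w(A)\le W$, which is exactly the definition of a feasible solution for $(E,f,w,W)$, completing the proof. There is no real obstacle here: the argument is a routine invariant-plus-termination check, and the only point needing a word of care is the base case, which relies on $w(\emptyset)=0$ and $W\ge 0$.
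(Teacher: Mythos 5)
Your proof is correct. The paper states Lemma~\ref{lem:greedy_feasible} without giving a proof at all (it is treated as immediate from the guard in Step~\ref{greedy:selected}), and the loop-invariant-plus-termination argument you give is exactly the routine justification the authors are implicitly relying on.
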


For any MSK instance $(E,f,w,W)$, we define a {\em value function} $V$. Let $\{a_1,\ldots, a_{\ell}\}$  be the output of $\greedy(E,f,w,W)$,
in the order by which the elements are added to $A$ in Step~\ref{greedy:selected} of Algorithm~\ref{alg:greedy}. Furthermore, define $A_i=\{a_1, \ldots, a_i\}$ for $i\in [\ell]$, and $A_0=\emptyset$.  We define $V:[0,w(A_{\ell})]\rightarrow \mathbb{R}_{\geq 0}$ by\footnote{For any $k\in \mathbb{N}_+$ we use $[k]$ to denote the set $\{i \in \mathbb{N}~|~1\leq i \leq k  \}=\{1,2,\ldots, k\}$.}
$$
\forall i\in [\ell],~w(A_{i-1})\leq u\leq w(A_i):~~~~ V(u) =  f(A_{i-1}) + (u-w(A_{i-1})) \frac{f_{A_{i-1}}(\{a_i\})}{w(a_i)}.$$   We note that the value of $V(w(A_i))$ is well defined for $i\in [\ell -1]$ since
\[
f(A_{i-1}) + \left( w(A_{i})- w(A_{i-1})\right) \frac{f_{A_{i-1}} (\{a_{i}\}) } { w(a_{i})}=f(A_i) = f(A_i ) + \left( w(A_i)- w(A_i)\right) \frac{f_{A_{i}} (\{a_{i+1}\}) } { w(a_{i+1})}.
\]
That is, the value function $V$ is semi-linear and continuous. By definition we have that $V(0)=f(\emptyset)$ and $V(w(A_{\ell}))=f(A_{\ell})$. Intuitively, $V(u)$ can be viewed as the value attained by Algorithm \ref{alg:greedy} while using capacity of $u$.  We use $V'$ to denote the first derivative of $V$. We note that $V'(u)$ is defined for almost all  $u\in [0,w(A_{\ell})]$. Similar to \cite{Sv04}, our analysis is based on lower bounds over $V'$ (in \cite{Sv04} the analysis used a discretization of $V$, thus omitting the differentiation). 

For every $i\in [\ell ]$ and $u\in(w(A_{i-1}),w(A_i))$ we have $V'(u)= \frac{f_{A_{i-1}} (a_i)}{w(a_i)}$. The next lemma gives a lower bound for $V'$.  

\begin{lemma}
	\label{lem:dens_to_Y}
Let $E'$ be the set from Algorithm~\ref{alg:greedy} at the beginning of the  iteration in which $a_i$ is selected, and let $\emptyset\neq Y\subseteq A_{i-1}\cup E'$. Then, 
$\frac{f_{A_{i-1}} (a_i)}{w(a_i)} \geq \frac{f(Y)-f(A_{i-1})}{w(Y)}$. 
\end{lemma}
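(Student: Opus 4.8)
The plan is to decompose $Y$ using the fact that, at the moment $a_i$ is selected, the ground set splits into $A_{i-1} \cup E'$ (elements selected so far, plus elements not yet considered) and the set of elements already considered and discarded, and to exploit that $a_i$ is exactly the element of $E'$ of maximum marginal density with respect to $A_{i-1}$. Write $Y_A := Y \cap A_{i-1}$ and $Y_E := Y \cap E'$. Since $A_{i-1} \cap E' = \emptyset$, these form a partition of $Y$, so $w(Y) = w(Y_A) + w(Y_E) \geq w(Y_E)$. At the beginning of the iteration in which $a_i$ is selected we have $A = A_{i-1}$ and $a_i \in E'$, so Step~\ref{greedy:considered} gives $\frac{f_{A_{i-1}}(a_i)}{w(a_i)} \geq \frac{f_{A_{i-1}}(e)}{w(e)}$ for every $e \in E'$, i.e.\ $f_{A_{i-1}}(e) \leq \frac{f_{A_{i-1}}(a_i)}{w(a_i)}\, w(e)$ for every $e \in E'$.

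The core of the argument is then the chain
\begin{align*}
f(Y) - f(A_{i-1}) &\leq f(A_{i-1} \cup Y_E) - f(A_{i-1}) = f_{A_{i-1}}(Y_E) \leq \sum_{e \in Y_E} f_{A_{i-1}}(e) \\
&\leq \frac{f_{A_{i-1}}(a_i)}{w(a_i)} \sum_{e \in Y_E} w(e) = \frac{f_{A_{i-1}}(a_i)}{w(a_i)}\, w(Y_E) \leq \frac{f_{A_{i-1}}(a_i)}{w(a_i)}\, w(Y).
\end{align*}
The first inequality is monotonicity of $f$ applied to $Y = Y_A \cup Y_E \subseteq A_{i-1} \cup Y_E$ (using $Y_A \subseteq A_{i-1}$). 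The second inequality is submodularity: fixing any order $Y_E = \{e_1, \dots, e_k\}$ and telescoping, $f_{A_{i-1}}(Y_E) = \sum_{j=1}^{k}\big( f(A_{i-1} \cup \{e_1,\dots,e_j\}) - f(A_{i-1}\cup\{e_1,\dots,e_{j-1}\}) \big)$, and each summand is at most $f_{A_{i-1}}(e_j)$ since $A_{i-1} \subseteq A_{i-1}\cup\{e_1,\dots,e_{j-1}\}$. The third inequality applies the greedy density bound above to each $e \in Y_E \subseteq E'$, and the last inequality uses $w(Y_E) \leq w(Y)$ together with $\frac{f_{A_{i-1}}(a_i)}{w(a_i)} \geq 0$, which holds because $f$ is monotone. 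Dividing by $w(Y) > 0$ (note $Y \neq \emptyset$ and all weights are positive integers) yields the claim.

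I do not expect a real obstacle here; the only points needing care are (i) invoking the greedy comparison at the correct snapshot of the algorithm — the start of the iteration in which $a_i$ is both considered and selected, when $A = A_{i-1}$ and every element of $E'$ (including $a_i$) is still present — and (ii) the non-negativity of $f_{A_{i-1}}(a_i)/w(a_i)$, which is precisely what licenses replacing $w(Y_E)$ by $w(Y)$ in the final step. Everything else is a routine combination of monotonicity and submodularity, and the same chain also handles the degenerate case $f_{A_{i-1}}(a_i) = 0$, in which it forces $f(Y) \leq f(A_{i-1})$.
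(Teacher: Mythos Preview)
Your proof is correct and follows essentially the same approach as the paper's: monotonicity to pass to $A_{i-1}\cup Y$ (equivalently $A_{i-1}\cup Y_E$), submodularity to bound the marginal by a sum of singleton marginals, and the greedy density comparison for elements of $E'$. The only cosmetic difference is that the paper keeps the elements of $Y\cap A_{i-1}$ in the sum and observes their marginal density is $0$, whereas you absorb them into $A_{i-1}$ up front and then pad $w(Y_E)$ back to $w(Y)$ at the end using nonnegativity of the greedy density.
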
 
\begin{proof}
	Let $m=|Y|$ and  $Y=\{y_1,\ldots, y_m\}$.
	For any $1\leq j \leq m$, if $y_j\in A_{i-1}$ then  $\frac{f_{A_{i-1}}  \left(\left\{y_j \right\}\right)}{w\left(y_j\right)}=0 \leq  \frac{f_{A_{i-1} } \left(\left\{a_{i} \right\}\right)}{w\left(a_{i}\right) }
	$. Otherwise, by the assumption of the lemma,  $y_{j}\in E'$ and therefore $\frac{f_{A_{i-1} } \left(\left\{y_j\right\}\right)}{w\left(y_j\right)} \leq  \frac{f_{A_{i-1} } \left(\left\{a_{i} \right\}\right)}{w\left(a_{i}\right) }$, since $a_{i}$  was selected in Step \ref{greedy:selected} of Algorithm \ref{alg:greedy} when the value of the variable $A$ was $A_{i-1}$. Thus, $\frac{f_{A_{i-1} } \left(\left\{y_j\right\}\right)}{w\left(y_j\right)} \leq  \frac{f_{A_{i-1} } \left(\left\{a_{i} \right\}\right)}{w\left(a_{i}\right) }
	$ for every $j\in [m]$. By the last inequality, and since $f$ is monotone and submodular, we have the following.
	\begin{equation*}
	\begin{aligned}
	f\left(Y\right)  &\leq f\left(A_{i-1} \cup Y\right) \\
	&= f(A_{i-1})+\sum_{j=1}^{m} f_{A_{i-1} \cup \left\{y_{1},\ldots ,y_{j-1}\right\}} \left(\left\{y_{j}\right\} \right)\\
	&\leq f(A_{i-1})+\sum_{j=1}^{m} f_{A_{i-1}} \left(\left\{y_{j} \right\}\right)\\
	&= f(A_{i-1}) + \sum_{j=1}^{m} w\left(y_{j}\right) \frac{f_{A_{i-1} } \left(\left\{y_j \right\}\right)}{w\left(y_j\right)} \\
	&\leq f(A_{i-1} ) + \sum_{j=1}^{m} w\left(y_j\right) \frac{f_{A_{i-1} } \left(\left\{ a_{i} \right\}\right)}{w\left( a_{i}\right)} \\
	&=  f(A_{i-1})+ w(Y)  \frac{f_{A_{i-1}} \left(\left\{ a_{i} \right\}\right)}{w\left( a_{i}\right)}.
	\end{aligned}
	\end{equation*} 
By rearranging the terms, we have $\frac{f_{A_{i-1}} (a_i)}{w(a_i)} \geq \frac{f(Y)-f(A_{i-1})}{w(Y)}$, as desired.
\end{proof}

To lower bound $V$, we use
Lemma \ref{lem:dens_to_Y} with several different sets as $Y$. Let $X$  be a solution for an MSK instance $I=(E,f,w,W)$ and  $X_1,\ldots, X_k$  be a partition of $X$  such that $\frac{f_{X_1\cup \ldots \cup X_{i-1}} (X_i)}{w(X_i)}\geq \frac{f_{X_1\cup \ldots \cup X_{i}} (X_{i+1})}{w(X_{i+1})}$ for every $i\in [k-1]$.
 It makes sense to first use Lemma \ref{lem:dens_to_Y} with $Y=X_1$  and utilize the differential inequality to lower bound $V$ on an interval $[0,D_1]$. The point $D_1$ is set such that, on the interval $[D_1,D_2]$, using Lemma \ref{lem:dens_to_Y} with $Y=X_1\cup X_2$ yields a better lower bound on $V'$ in comparison to $Y=X_1$.  Subsequently, the resulting differential inequality is used to bound $V$ on $[D_1, D_2]$. When repeated $k$ times, the process results in the {\em bounding function} $h$, formally given in Definition \ref{def:bounding}. Lemma~\ref{lem:bound} shows that indeed $h$ lower bounds $V$.

\begin{defn}
	\label{def:bounding}
Let $I=(E,f,w,W)$ be an MSK instance, and 
consider $X\subseteq E$ such that $w(X)\leq W$, and $X_1,\ldots, X_k$ is a partition of $X$ ($X_i\neq \emptyset$ for all $i\in [k]$). Denote $S_j= \bigcup_{i=1}^{j} X_i$ ($S_0=\emptyset$) and $r_j =\frac{f_{S_{j-1}}(X_j)}{w(X_j)}$ for $j\in[k]$. 
 Also,  assume that $r_{1}\geq r_{2}\geq \ldots \geq r_{k}$ and define $D_0=0$, $D_j =\sum_{i=1}^{j} w(X_{i}) \ln \frac{r_{i}}{r_{j+1}}$ for $0\leq j \leq k-1$, and $D_k=\infty$.\footnote{We define $\ln \frac{0}{0}=\ln\frac{a}{0}=\infty$.}  The {\em bounding function} of $X_1,\ldots, X_k$ and $I$ is $h:\mathbb{R}_{\geq 0}\rightarrow \mathbb{R}_{\geq 0}$ defined by 
 \begin{equation}
 \label{eq:bounding}
 \forall j\in[k],~D_{j-1}\leq u < D_j:~~~~
 h(u) = f(S_j) -r_j\cdot w(S_j)\cdot \exp\left(- \frac{u-D_{j-1}}{w(S_j)}\right)
 \end{equation}	
 \end{defn}
It can be easily verified that $D_0\leq D_1\leq \ldots \leq D_k$, and  $D_{j'}=D_{j}$ for $j'<j$ if and only if $r_{j'+1}=r_{j+1}$. 
By definition, the bounding function is differentiable  almost everywhere. Furthermore, for every $j\in[k]$ such that $D_j<D_{j+1}$ and $D_j\neq 0$, let $j'$ be the minimal $j'\in[j]$ such that $D_{j'}= D_j$. Then,  
\begin{equation*}
\begin{aligned}
\lim_{u\nearrow D_j} &h(u) =
f(S_{j'})-r_{j'}\cdot w(S_{j'}) \cdot \exp\left(- \frac{D_{j'}-D_{j'-1}}{w(S_{j'})}\right)\\
&=
f(S_{j'}) -r_{j'}\cdot w(S_{j'})  \cdot \exp\left(- \frac{\sum_{i=1}^{j'} w(X_{i}) \ln \frac{r_{i}}{r_{j'+1}} - \sum_{i=1}^{j'-1} w(X_{i}) \ln \frac{r_{i}}{r_{j'}} }{w(S_{j'})}\right)\\
&=
f(S_{j'})  -r_{j'}\cdot w(S_{j'})  \cdot \exp\left( 
- \frac{\sum_{i=1}^{j'} w(X_{i})}{w(S_{j'})}\cdot \ln \frac{r_{j'}}{r_{j'+1}}
\right)
\\
&=f(S_{j'})  -r_{j'+1} \cdot w(S_{j'})  \\
&= f(S_{j'})  +\sum_{i=j'+1}^{j+1} f_{S_{i-1}} (X_i) - \sum_{i=j'+1}^{j+1} r_{i} \cdot w(X_i) -r_{j+1} \cdot w(S_{j'}) \\
&= f(S_{j+1}) - r_{j+1} \cdot w(S_{j+1})=
 h(D_j).
\end{aligned}
\end{equation*}
The first equality follows from \eqref{eq:bounding}. The second and forth equalities follow from the definitions of $D_{j'}$ and $S_{j'}$. The sixth equality holds since for every $j'< i \leq j+1$ we have that
 $\frac{f_{S_{i-1}}\left(X_{i} \right) }{w\left(X_{i}\right)} = r_{i}= r_{j+1}$. 
Thus, the bounding function $h$ is continuous.

Let $S^*$ be an optimal solution for an MSK instance, $I=(E,f,w,W)$.
%$I=(E,f,w,W)$ be an MSK instance,  $S^*$ its optimal solution  and  assume $w(S^*)=W$.
 Then, the bounding function of $I$ 
 and $S^*$ (i.e., $k=1$) is 
$h(u)= f(S^*)\left(1-\exp\left(-\frac{u}{W} \right)\right)$. It follows from \cite{Sv04} that $V(u)\geq\left(1-\exp\left(-\frac{u}{W} \right)\right)f(S^*)= h(u)f(S^*)$ for $u\in [0, W-\max_{e\in S^*} w(e)] \cap \mathbb{N}$, where the restriction to integer values can be easily relaxed. Thus, the following lemma can be viewed as a  generalization of the analysis of~\cite{Sv04}.

\begin{lemma}
	\label{lem:bound}
	
Let $I=(E,f,w,W)$ be an MSK instance, $V$ its value function, and $A$ the output of Algorithm \ref{alg:greedy} for the instance $I$. Consider a subset of elements $X\subseteq E$ where $w(X)\leq W$, and 
a partition $X_1,\ldots, X_k$ of $X$, such that $\frac{f_{X_1\cup \ldots \cup X_{i-1}} (X_i)}{w(X_i)}\geq \frac{f_{X_1\cup \ldots \cup X_{i}} (X_{i+1})}{w(X_{i+1})}$ for any $i\in [k-1]$. Let $h$ be the bounding function of $I$ and $X_1,\ldots, X_k$, and $W_{\max}=\min\left\{ W-\max_{e\in X} w(e), ~ w(A)\right\}$. Then, for any $u\in [0, W_{\max}]$, it holds that $V(u)\geq h(u)$. 
\end{lemma}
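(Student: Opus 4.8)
The plan is to prove the inequality $V(u) \geq h(u)$ on $[0, W_{\max}]$ by a differential-inequality argument that mirrors, but strengthens, the analysis of \cite{Sv04}. The key observation is that both $V$ and $h$ are continuous, piecewise-smooth functions with $V(0) = f(\emptyset) \leq f(\emptyset) = h(0)$ (indeed $h(0) = f(S_1) - r_1 w(S_1) \leq f(\emptyset)$... one must check the base case carefully, but $h(0)$ equals $f(\emptyset)$ only in degenerate cases, so actually the correct base case is $V(0) = f(\emptyset) \geq h(0)$ since $h(0) = f(S_1) - \frac{f(X_1)-f(\emptyset)}{w(X_1)} w(X_1) = f(\emptyset)$). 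So the two functions agree at $u=0$. The strategy is then a comparison lemma: it suffices to show that whenever $V(u) \leq h(u)$ at some point $u \in (0, W_{\max})$ where both are differentiable, we have $V'(u) \geq h'(u)$; continuity plus this local domination then forces $V \geq h$ throughout.

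The main steps I would carry out are as follows. First, fix $u \in (0, W_{\max})$ in the interior of some greedy interval $(w(A_{i-1}), w(A_i))$ and in some bounding interval $[D_{j-1}, D_j)$. By the formula stated just before Lemma~\ref{lem:dens_to_Y}, $V'(u) = \frac{f_{A_{i-1}}(a_i)}{w(a_i)}$. I want to apply Lemma~\ref{lem:dens_to_Y} with $Y = S_j = X_1 \cup \cdots \cup X_j$. For this to be legal I must verify the hypothesis of Lemma~\ref{lem:dens_to_Y}, namely that $S_j \subseteq A_{i-1} \cup E'$ where $E'$ is the state of the ground set when $a_i$ is selected. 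This is where the bound $u \leq W_{\max} \leq W - \max_{e \in X} w(e)$ is used: it guarantees that no element of $X$ has yet been discarded as infeasible, so every element of $X \setminus A_{i-1}$ still lies in $E'$. With the hypothesis verified, Lemma~\ref{lem:dens_to_Y} gives $V'(u) \geq \frac{f(S_j) - f(A_{i-1})}{w(S_j)}$. Second, I combine this with the assumption $V(u) \leq h(u)$ and the fact $f(A_{i-1}) = V(w(A_{i-1})) \leq V(u) \leq h(u)$ (monotonicity of $V$), together with $w(S_j) \leq w(X) \leq W$, to get
\[
V'(u) \geq \frac{f(S_j) - V(u)}{w(S_j)} \geq \frac{f(S_j) - h(u)}{w(S_j)}.
\]
Third, I compute $h'(u)$ directly from \eqref{eq:bounding}: on $[D_{j-1}, D_j)$ one has $h'(u) = \frac{r_j w(S_j)}{w(S_j)} \exp(-\frac{u - D_{j-1}}{w(S_j)}) = \frac{f(S_j) - h(u)}{w(S_j)}$, since $f(S_j) - h(u) = r_j w(S_j) \exp(-\frac{u-D_{j-1}}{w(S_j)})$. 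Hence $V'(u) \geq h'(u)$ exactly as needed, and the comparison lemma closes the argument.

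The point requiring the most care — and the main obstacle — is the interplay between the two partitions of the domain (the greedy breakpoints $w(A_i)$ and the bounding breakpoints $D_j$) and the behavior of the comparison argument at these breakpoints, where $V'$ or $h'$ may jump. One must argue that the set of "bad" points (where $V < h$) is open, take its infimum $u_0$, use continuity to get $V(u_0) = h(u_0)$, and then derive a contradiction from the right-derivative inequality at $u_0$; handling $u_0$ lying exactly on a breakpoint of either function requires invoking one-sided derivatives, but the inequality $V'_+ (u_0) \geq h'_+(u_0)$ still holds by the same computation applied to the relevant right-hand interval. A secondary subtlety is the edge case $j = k$, where $D_k = \infty$ and $r_{k+1}$ is undefined: here $W_{\max} \leq w(A) $ ensures $u$ stays in the range where $V$ is defined, and the formula for $h$ on $[D_{k-1}, \infty)$ still has the exponential form, so the same computation goes through. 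One should also double-check the degenerate cases where some $r_j = 0$ or consecutive $D_j$ coincide, using the convention $\ln\frac{a}{0} = \infty$ and the already-established continuity of $h$.
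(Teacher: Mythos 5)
Your proof is correct and follows essentially the same route as the paper: both hinge on applying Lemma~\ref{lem:dens_to_Y} with $Y=S_j$ (after verifying $S_j\subseteq A_{i-1}\cup E'$ via the bound $u\leq W_{\max}\leq W-\max_{e\in X}w(e)$), on the monotonicity of $V$ to pass from $f(A_{i-1})=V(w(A_{i-1}))$ to $V(u)$, and on a differential comparison that exploits the identity $h'(u)=\frac{f(S_j)-h(u)}{w(S_j)}$. The one genuine (if minor) difference is in how the comparison step is packaged: the paper isolates a reusable Gronwall-style lemma (Lemma~\ref{lem:comparison}), stated in terms of a ``positively linear'' comparison function $\varphi(u,v)=\frac{f(S_j)-v}{w(S_j)}$, for which it establishes $V'\geq\varphi(u,V)$ unconditionally and $h'=\varphi(u,h)$ and then invokes that lemma; you instead fold the Lipschitz structure away and argue directly that $V\leq h$ forces $V'\geq h'$, then close with an infimum-of-the-bad-set argument. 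Both are valid; note that the cleanest way to make your closing step rigorous is to invoke absolute continuity (piecewise $C^1$) and the fundamental theorem of calculus on $(u_0,u^*)$ rather than a pointwise one-sided derivative comparison at $u_0$ alone, since $V'_+(u_0)\geq h'_+(u_0)$ by itself does not preclude $V$ dipping below $h$ immediately afterward --- you need the inequality on the whole interval, integrated. The paper's more abstract packaging buys reusability and, because the author handles the breakpoints explicitly through the partition $C=\bigcup(c_r,c_{r+1})$, avoids the one-sided-derivative subtleties you flagged; your version is slightly more self-contained and does not require stating a separate comparison theorem.
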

 The proof of Lemma \ref{lem:bound} uses a differential comparison argument. We say a  function $\varphi:Z\rightarrow \mathbb{R}$, $Z\subseteq \mathbb{R}^2$, is {\em positively linear in the second dimension} if there is $K>0$ such that for any $u, t_1, t_2$ where $(u,t_1), (u,t_2)\in Z$  it holds that $\varphi(u,t_1)-\varphi(u,t_2)= K\cdot \left(t_1-t_2\right)$. The following is a simple variant of standard differential comparison theorems (see, e.g., \cite{Mc86}). 
\begin{lemma}
	\label{lem:comparison}
Let $[a,b]=\bigcup_{r=1}^{s} [c_{r},c_{r+1}]$ be an interval such that $c_1\leq c_2\leq \ldots \leq c_{s+1}$ and $\vartheta_1, \vartheta_2:[a,b]\rightarrow \mathbb{R}$ be two continuous functions  such that $\vartheta_1(a)\geq \vartheta_2(a)$ and the deriviatives $\vartheta'_1,\vartheta'_2$ are defined and continuous on $(c_r,c_{r+1})$ for every $r\in [s]$. Also, for any $r\in [s]$ let $\varphi_r:(c_r,c_{r+1})\times \mathbb{R} \rightarrow \mathbb{R}$ be positively linear in the second dimension.
%a continuous function which satisfies the Lipschitz condition.
  If $\vartheta_1'(u)\geq \varphi_r(u, \vartheta_1(u))$ and $\vartheta_2'(u)\leq \varphi_r(u, \vartheta_2(u))$ for every $r\in [s]$ and $u\in (c_r,c_{r+1})$, then $\vartheta_1(u) \geq \vartheta_2(u)$ for every $u\in [a,b]$. 
\end{lemma}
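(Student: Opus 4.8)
The plan is to pass to the difference $\psi:=\vartheta_1-\vartheta_2$ and run a Gr\"onwall/integrating-factor argument piece by piece, then chain the pieces by induction on $r$. First I would record that $\psi$ is continuous on $[a,b]$ (difference of continuous functions) and, on each open piece $(c_r,c_{r+1})$, differentiable with $\psi'=\vartheta_1'-\vartheta_2'$. Combining the two hypotheses with the positive linearity of $\varphi_r$ in the second dimension, for $u\in(c_r,c_{r+1})$ we get
\[
\psi'(u)\;=\;\vartheta_1'(u)-\vartheta_2'(u)\;\ge\;\varphi_r\!\big(u,\vartheta_1(u)\big)-\varphi_r\!\big(u,\vartheta_2(u)\big)\;=\;K_r\big(\vartheta_1(u)-\vartheta_2(u)\big)\;=\;K_r\,\psi(u),
\]
where $K_r>0$ is the constant attached to $\varphi_r$ by the definition of positive linearity. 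Thus on every piece $\psi$ satisfies the scalar differential inequality $\psi'\ge K_r\psi$.

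Next I would prove the single-piece statement: \emph{if $\psi(c_r)\ge 0$ then $\psi(u)\ge 0$ for all $u\in[c_r,c_{r+1}]$.} Set $g(u):=e^{-K_r(u-c_r)}\psi(u)$ on $[c_r,c_{r+1}]$. Then $g$ is continuous on the closed interval and, on its interior, $g'(u)=e^{-K_r(u-c_r)}\big(\psi'(u)-K_r\psi(u)\big)\ge 0$. Hence $g$ is non-decreasing on $[c_r,c_{r+1}]$ (a continuous function with non-negative derivative on the interior is non-decreasing on the closure, by the mean value theorem), so $g(u)\ge g(c_r)=\psi(c_r)\ge 0$, and multiplying back by the positive factor $e^{K_r(u-c_r)}$ yields $\psi(u)\ge 0$ throughout $[c_r,c_{r+1}]$; in particular $\psi(c_{r+1})\ge 0$. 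If $c_r=c_{r+1}$ the claim is vacuous.

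Finally I would chain the pieces by induction on $r\in\{1,\dots,s\}$ with the invariant $\psi(c_r)\ge 0$. The base case $r=1$ is exactly the hypothesis $\psi(a)=\vartheta_1(a)-\vartheta_2(a)\ge 0$, since $c_1=a$. For the inductive step, the single-piece statement applied to $[c_r,c_{r+1}]$ gives $\psi\ge 0$ on that piece and, in particular, $\psi(c_{r+1})\ge 0$, which is the invariant for $r+1$. Since $[a,b]=\bigcup_{r=1}^{s}[c_r,c_{r+1}]$, this shows $\psi(u)\ge 0$, i.e. $\vartheta_1(u)\ge\vartheta_2(u)$, for every $u\in[a,b]$.

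I do not expect a real obstacle here; the only points needing care are bookkeeping ones. The derivatives are assumed continuous only on the open pieces, so the argument must avoid evaluating $\psi'$ at the breakpoints $c_r$ and instead rely on continuity of $\psi$ on the closed pieces together with the ``non-negative interior derivative $\Rightarrow$ non-decreasing'' fact; and the possibly degenerate pieces $c_r=c_{r+1}$ must be dismissed trivially so the induction still covers all of $[a,b]$. Everything else is the classical differential-comparison computation, and the reference to \cite{Mc86} can be cited for the general principle if one prefers not to reproduce the integrating-factor step.
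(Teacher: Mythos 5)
Your proof is correct and follows the same decomposition as the paper: first a single-piece comparison statement on each closed subinterval $[c_r,c_{r+1}]$, then an induction on $r$ to chain the pieces together. The paper establishes the single-piece step by citing Gr\"onwall's inequality applied to $\delta=\vartheta_2-\vartheta_1$, whereas you unfold the integrating-factor argument directly via $g(u)=e^{-K_r(u-c_r)}\psi(u)$; this is a bit more self-contained but is the same underlying computation.
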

The lemma follows from standard arguments in the theory of differential equations. A formal proof is given in Appendix \ref{sec:app}
\begin{proof}[Proof of Lemma \ref{lem:bound}]
Let $(D_j)_{j=0}^k$,  and  $\left(S_j\right)_{j=0}^{k}$ be as in Definition \ref{def:bounding}.
Define 
\begin{equation}
\label{eq:phi_def}
\varphi(u,v)= \frac{f(S_j) - v}{w(S_j)}
\end{equation}
for $j\in[k]$, $D_{j-1}\leq u<D_j$ and $v\in\mathbb{R}$. 
Let $A=\{a_1,\ldots, a_{\ell}\}$, where $a_1,\ldots, a_{\ell}$ is the order by which the elements were added to $A$ in Step \ref{greedy:selected} of Algorithm \ref{alg:greedy}. As before, we use $A_i=\{a_1,\ldots, a_i\}$ for $i\in [\ell]$ and $A_0=\emptyset$. 
Let $C= (0,W_{\max})\setminus \{D_1,\ldots, D_{k-1}\}\setminus \{a_1,\ldots, a_{\ell}\}$. Then, for any $u \in C$, 
there is $j\in[k]$ such that $D_{j-1}<u<D_j$. Hence,
\begin{equation}
\label{eq:h_diff}
\varphi(u,h(u))= \frac{f(S_j) - h(u)}{w(S_j)} = 
r_{j}\cdot \exp\left( -\frac{u-D_{j-1}}{w(S_j)}\right)
=h'(u),
\end{equation}
where $h'$ is the first derivative of $h$.  
As $u\in C$, there is also $i\in [\ell]$ such that $w(A_{i-1})<u<w(A_i)$. Hence,
\begin{equation}
\label{eq:V_diff}
\begin{aligned}
V'(u)%&
= \frac{f_{A_{i-1}}(a_i)}{w(a_i)}
 %\\&
\geq 
\frac{f(S_j)- f(A_{i-1})}{w(S_j)}
%\\&
=  \frac{f(S_j)- V(w(A_{i-1}))}{w(S_j)}
%\\&
\geq \frac{f(S_j)- V(u)}{w(S_j)}
%\\&
=\varphi(u,V(u))
\end{aligned}
\end{equation}
For the first inequality, we note that $X \subseteq A_{i-1}\cup E'$,
where $E'$ is the set at the beginning of the iteration in which $a_i$ was selected. Indeed, otherwise we have that $X$ contains an element $e \in E$ that was considered by the algorithm at some iteration $1 \leq \ell < i$, but not selected  since $w(A_\ell \cup \{ e \}) > W$. This would imply that $u > w(A_{i-1}) \geq w(A_\ell) > W_{max}$. Thus, as $S_j \subseteq X$ we have the conditions of Lemma~\ref{lem:dens_to_Y}.
%The first equality follows from the definition of $V$. The first inequality is by Lemma \ref{lem:dens_to_Y}. We note that since $u<W-\max_{e\in X} w(x)$, it holds that $S_j\subseteq X \subseteq A_{i-1}\cup E'$, where $E'$ is the set at the iteration in which $a_{i-1}$ was selected.
 The second inequality holds since $V$ is increasing. By \eqref{eq:h_diff} and \eqref{eq:V_diff} we have
\begin{equation}
\label{eq:forall_u}
\forall u\in C :~~~ h'(u)=\varphi(u, h(u)) \textnormal{~~~and~~~} V'(u)\geq \varphi(u,V(u)).
\end{equation}

We can write $C=\bigcup_{r=1}^{s} (c_r, c_{r+1})$ where $0=c_1\leq c_2 \leq \ldots \leq c_{s+1}=W_{\max}$.
For any $r\in [s]$ let $\varphi_r:(c_r,c_{r+1})\rightarrow\mathbb{R}$ be the restriction of $\varphi$ to $(c_r, c_{r+1})$ ($\varphi_r(u)=\varphi(u)$ for any $u\in (c_r, c_{r+1})$). It can be easily verified that $\varphi_r$ is continuous and positively linear in the second dimension. Furthermore, it holds that  $V'$ and $h'$ are continuous on $(c_r,c_{r+1})$ for any $r\in [s]$. Thus, by \eqref{eq:forall_u} and Lemma \ref{lem:comparison} it holds that $V(u)\geq h(u)$ for any $u\in [0,W_{\max}]$.
\end{proof}

\section{ A Faster Greedy Based Algorithm}
\label{sec:enum}
\begin{algorithm}[h]
	\SetAlgoLined
	%\SetKwInOut{Configuration}{Configuration}
	\SetKwInOut{Input}{Input}
	%\SetKwInOut{Configuration}{Configuration}
	\Input{An MSK instance $(E,f,w,W)$, and enumeration size $\kappa \in \mathbb{N}$.}
%	\Configuration{Enumeration size $\kappa \in \mathbb{N}$.}
	%\Configuration{The parameters $N,P, \delta$ and $\mu$.}
	%{\bf Input:} 
	%\Input{An SMKP instance $\II =(I,w,B,W,f)$, a parameter $\aeps>0$,  partition $(B_j)_{j=0}^\ell$ of the bins to blocks and subsets $(S_j)_{j=0}^\ell$ of $I$.}
	\DontPrintSemicolon
	Set $S^*\leftarrow \emptyset$ \;
	\For{every  $G\subseteq E$, $|G|\leq \kappa$ \label{enum:loop}}{ 
		
		$A\leftarrow \greedy(E,f_G, w, W-w(G))$
		\label{enum:greedy}
		\;
		If $f(A\cup G)\geq f(S^*)$ then $S^*\leftarrow A \cup G$ \label{enum:update}\;
		
	}
	Return $S^*$ 
	\;
	\caption{$\enum_{\kappa}(E,f,w,W)$} 
	\label{alg:enum}
\end{algorithm}

To prove Theorem \ref{thm:main} we use $\enum_2$, i.e., we take Algorithm~\ref{alg:enum} with $\kappa = 2$.
We note that $\enum_3$ is the $(1-e^{-1})$-approximation algorithm 
of~\cite{Sv04}.  

\begin{lemma}
\label{lem:enum2}
$\enum_{2}$ is a $(1-e^{-1})$-approximation  for MSK.
\end{lemma}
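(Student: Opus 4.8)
The plan is to run the analysis of \cite{Sv04} but to feed into it the sharper estimate of Lemma~\ref{lem:bound} in place of Sviridenko's crude bound on the greedy phase. Fix an optimal solution $S^*$, write $\OPT=f(S^*)$, and (translating $f$) assume $f(\emptyset)=0$. If $|S^*|\le 2$, then $\enum_2$ examines $G=S^*$ in Step~\ref{enum:greedy}, so the set it stores in Step~\ref{enum:update} has value at least $f(G)=\OPT$ and we are done. Assume $|S^*|\ge 3$, order $S^*=\{o_1,\dots,o_m\}$ ($m\ge 3$) greedily by marginal value, $o_i\in\argmax_{e\in S^*\setminus\{o_1,\dots,o_{i-1}\}} f_{\{o_1,\dots,o_{i-1}\}}(e)$, and put $\delta_i=f_{\{o_1,\dots,o_{i-1}\}}(o_i)$; by the greedy choice and submodularity $\delta_1\ge\dots\ge\delta_m\ge 0$ and $\sum_i\delta_i=\OPT$. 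I would take $G=\{o_1,o_2\}$, $X=S^*\setminus G$, and $W'=W-w(G)$, noting $w(X)=w(S^*)-w(G)\le W'$. The combinatorial fact driving everything is that every $x\in X$ satisfies $f_G(x)=f_{\{o_1,o_2\}}(x)\le f_{\{o_1,o_2\}}(o_3)=\delta_3\le\tfrac12 f(G)$ — the middle step because $o_3$ maximizes $f_{\{o_1,o_2\}}(\cdot)$ over $S^*\setminus\{o_1,o_2\}$, the last because $f(G)=\delta_1+\delta_2\ge 2\delta_3$.

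Let $A=\greedy(E,f_G,w,W')$ and let $V$ be its value function; by Lemma~\ref{lem:greedy_feasible} the stored set $A\cup G$ is feasible, so it suffices to prove $f(A\cup G)=f(G)+f_G(A)\ge(1-e^{-1})\OPT$. If $\greedy$ selects every element of $X$ then $X\subseteq A$, hence $f(A\cup G)\ge f(S^*)=\OPT$. Otherwise some $v\in X$ is considered but not selected, and at that iteration the current set $A'$ obeys $w(A')+w(v)>W'$, so $w(A)\ge w(A')>W'-w(v)\ge W'-w^*$, where $w^*:=\max_{e\in X}w(e)$ and $0\le W'-w^*$ since $w^*\le w(X)\le W'$. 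I would then invoke Lemma~\ref{lem:bound} on the instance $(E,f_G,w,W')$ with the set $X$ and a partition $X_1,\dots,X_k$ of $X$ of non-increasing $f_G$-densities (the simplest valid choice being the singletons of $X$ listed by decreasing density). As $w(A)>W'-w^*$ we get $W_{\max}=W'-w^*$, hence $f_G(A)=V(w(A))\ge V(W'-w^*)\ge h(W'-w^*)$ for the associated bounding function $h$, and the whole lemma reduces to the analytic inequality
\[
 f(G)+h(W'-w^*)\ \ge\ (1-e^{-1})\,\OPT .
\]
The facts available for it: the pieces of the partition have masses $r_jw(X_j)=(f_G)_{X_1\cup\cdots\cup X_{j-1}}(X_j)\le f_G(x_j)\le\delta_3\le\tfrac12 f(G)$, these masses sum to $f_G(X)=\OPT-f(G)$, and the weights sum to $w(X)\le W'$ with maximum weight $w^*\le w(X)$.

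I expect this analytic inequality to be the main obstacle, and it is exactly the point where the refined bound is needed. With the trivial partition $k=1$ one only has $h(W'-w^*)=f_G(X)\bigl(1-e^{-(W'-w^*)/w(X)}\bigr)$; coupled with the classical ``complete the greedy set by the discarded element and subtract its marginal'' step this certifies at best $f_G(v)\le\delta_3\le\tfrac12 f(G)$, whereas reaching $1-e^{-1}$ along that route would require $f_G(v)\le e^{-1}f(G)$, and $\tfrac12>e^{-1}$ (for enumeration up to size three the analogous bound is $f_G(v)\le\delta_4\le\tfrac13 f(G)<e^{-1}f(G)$, which is why \cite{Sv04} goes through without the refinement). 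The missing slack has to be recovered from the finer partition: since the masses inside $X$ are themselves non-increasing and each at most $\tfrac12 f(G)$, the accumulated value $h(W'-w^*)$ exceeds the worst-case $f_G(X)(1-e^{-1})$ by enough, and — as the introduction's observation that ``$g\ne f$'' makes precise — the configuration forcing $h(W'-w^*)$ to be small and the configuration forcing the discarded mass to be large cannot be realized by one submodular function. Concretely I would normalize $\OPT=1$, write $h$ out piecewise from Definition~\ref{def:bounding}, and minimize the left-hand side over mass/weight vectors subject to the three constraints above; I would argue the extremal instance is simple (few elements in $X$, or one dominant weight with a negligible remainder), reducing the claim to a one-variable inequality in $f(G)\in[0,1]$ that can be verified directly.
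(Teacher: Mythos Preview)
Your setup is essentially the paper's: the same choice $G=\{o_1,o_2\}$, the same reduction to bounding $f(G)+f_G(A)$ in the iteration $G$, the same use of Lemma~\ref{lem:bound} at the point $W'-w^*$, and the correct key combinatorial fact $f_G(x)\le\delta_3\le\tfrac12 f(G)$ for every $x\in X$. Where you diverge from the paper is precisely at what you correctly flag as ``the main obstacle'', and that divergence is a genuine gap: you do not actually prove the analytic inequality $f(G)+h(W'-w^*)\ge(1-e^{-1})\OPT$. Saying you would ``minimize over mass/weight vectors'' and ``argue the extremal instance is simple, reducing to a one-variable inequality'' is a plan, not an argument; the optimization has a growing number of free parameters (all marginal values and all weights of $X$), and nothing you wrote constrains it to collapse to one variable. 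The paper does \emph{not} proceed this way.

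Concretely, the paper's partition is not into singletons but into exactly two blocks, $\{e^*\}$ and $R=X\setminus\{e^*\}$, where $e^*$ is the \emph{heaviest} element of $X$ (so that the evaluation point $W'-w^*$ matches the block structure). It then does a three-way case split on whether the evaluation point lies in the second piece of $h$ (i.e.\ $W'-w(e^*)\ge D_1$) and, if not, on which of $\{e^*\}$, $R$ is the denser block $X_1$. The first case is the substantive one and is closed with the log-sum inequality applied to the two logarithmic terms $w(X_1)\ln r_1+w(X_2)\ln r_2$; the other two cases reduce to short elementary estimates using $f(G)\ge 2\,f_G(e^*)$. None of this appears in your proposal, and it is exactly the missing content.

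A smaller technical issue: your parenthetical ``the simplest valid choice being the singletons of $X$ listed by decreasing density'' does not, in general, satisfy the hypothesis of Lemma~\ref{lem:bound}. The lemma requires the \emph{conditional} densities $r_j=(f_G)_{S_{j-1}}(x_j)/w(x_j)$ to be non-increasing, and ordering by the \emph{unconditional} densities $f_G(x)/w(x)$ guarantees $r_1\ge r_2$ but not $r_2\ge r_3$ (after conditioning on $x_1$, the element $x_2$ may lose more density than $x_3$). One can instead build the order greedily on conditional densities, but then the per-piece bound $r_jw(x_j)\le\tfrac12 f(G)$ that you rely on is a different object and your optimization would have to track it.
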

\begin{proof}
It can be easily verified that the algorithm always returns a feasible solution for the input instance. 
Let $(E,f,w,W)$ be an MSK instance and $Y\subseteq E$ an optimal solution for the instance. Let $Y=\{y_1,\ldots, y_{|Y|}\}$, and assume the elements are ordered by their marginal values. That is, $f_{\{y_1,\ldots, y_{i-1}\}} \left( \{y_i\}\right)= \max_{1\leq j \leq |Y| }f_{\{y_1,\ldots, y_{i-1}\}} \left( \{y_j\}\right)$ for every $1\leq i\leq |Y|$.  

If $|Y|\leq 3$ then there is an iteration of Step \ref{enum:loop} in which $G=\{y_i~|~i\in \{1,2\},~i\leq |Y|~\}\}$. In this iteration it holds that $f(G\cup A)\geq f(G)\geq \frac{2}{3}f(Y)$ (since $f$ is monotone); thus, following this iteration we have  $f(S^*)\geq \frac{2}{3}f(Y)\geq (1-e^{-1})f(Y)$. Therefore, in this special case the algorithm  returns an approximate solution as required. Hence, we may assume that $|Y|>	 3$.

Our analysis focuses on the iteration of the loop of Step \ref{enum:loop} in which $G=\{y_1, y_2\}$. Let $A$ be the output of $\greedy(E,f_G, w, W-w(G))$ in Step \ref{enum:greedy} in this iteration. If $Y\setminus G \subseteq A$ then $f(A\cup G) \geq f(Y)$, thus following this iteration it holds that $f(S^*)\geq f(Y)$, and the algorithm returns an optimal solution. Therefore, we may assume that
$Y\setminus G\not\subseteq A$.

\comment{
\ariel{I am not sure it is a good idea to add this explanation in the middle of the proof, but I do not have a better place}
We note that up to this point the analysis is similar to the one  of $\enum_3$ in \cite{Sv04}. In \cite{Sv04} it is shown that there is 
$e\in Y\setminus G$ such that 
$f_G(A\cup \{e\}) \geq f_G(Y)\left(1-e^{-1}\right)$. Thus
$f(G)+f_A(G)\geq f(G)+f_G(Y)\left(1-e^{-1}\right) -f_G(\{e\})$. The bound $f_G(\{e\})\leq \frac{1}{3}f(G)$ is then used to derive the approximation ratio. The intuition to our proof is that if $f_G(\{e\})\approx \frac{1}{3}f(G)$ then $f$ is behaves similarly to a modular (linear) function, thus, $f_G(A)$ should be significantly greater than $f_G(Y)\left(1-e^{-1}\right)$

Back to our proof,} 
Let $e^*\in Y\setminus G$ such that $w(e^*)=\max_{e\in Y\setminus G} w(e)$ and denote $R=Y\setminus G \setminus \{e^*\}$. Define two sets $X_1, X_2$ such that  $\{X_1, X_2\}=\left\{\{e^*\}, R\right\}$ and 
$\frac{f(X_1)}{w(X_1)}\geq \frac{f(X_2)}{w(X_2)}$. As $f$ is submodular it follows that $\frac{f_{X_1}(X_2) }{w(X_2)}\leq \frac{f(X_2)}{w(X_2)}\leq \frac{f(X_1)}{w(X_1)}$.
%Define $X_1=\{e^*\}$ and $X_2= Y\setminus G\setminus X_1$. Thus, $X_1,X_2$ is a partition of $Y\setminus G$ and $X_1,X_2\neq \emptyset$. 
Let $h$ be the bounding function of $(E, f_G, w, W-w(G))$ and  $X_1, X_2$. Also, let $r_1, r_2$,  and $D_1$ be the values from Definition \ref{def:bounding}.

\comment{
\ariel{fix this}
As $y_1,\ldots, y_{|Y|}$ are sorted by their marginal values, it holds that $f(y_1), f_{\{y_1\}}(y_2)\geq f_G(e^*)$. Hence, 
\begin{equation}
\label{eq:case1_estar}
3\cdot f_G(X_1)= 3\cdot f_{G}(\{e^*\}) \leq f(y_1)+f_{\{y_1\}}(y_2)+f_G(\{e^*\})= f(Y)-f_{G\cup X_1} (X_2).
\end{equation}
And similarly, 
\begin{equation}
\label{eq:case3_ineq}
\frac{3}{2}f(G) \geq    f(G) +f_G(X_1) = f(Y)-f_{G\cup X_1}(X_2) .
\end{equation} 
}

By Step \ref{greedy:selected} of Algorithm~\ref{alg:greedy}, as $Y\setminus G\not\subseteq A$, it follows that $w(A)\geq W-w(G)-w(e^*)$. Thus, by Lemma \ref{lem:bound}, it holds that $f_G(A)\geq V(W-w(G)-w(e^*))
\geq h(W-w(G)-w(e^*))$. We consider the following cases.

\paragraph{Case 1: $W-w(G)-w(e^*)\geq D_1$.}  In this case it holds that
\begin{equation}
\label{eq:case1_initial}
\begin{aligned}
f(G)&+f_G(A)\geq f(G)+h(W-w(G)-w(e^*)) \\
& =f(G)+f_G(X_1\cup X_2) -w(X_1\cup X_2)\cdot r_{2} \cdot \exp\left( -\frac{W-w(G)-w(e^*) -D_1 }{w(X_1 \cup X_2)} \right)\\
&= f(Y)   - w(X_1\cup X_2)\cdot \exp\left(
-\frac{W-w(G)-w(e^*) -w(X_{1}) \cdot \ln \frac{ r_{1}}{r_{2}} }{w(X_1\cup X_2)} +\ln r_{2}\right)\\
&\geq 
f(Y)   -w(X_1\cup X_2)\cdot \exp\left(
-1  +\frac{w(e^*)+w(X_{1}) \cdot \ln  r_{1} +
  w(X_{2})\cdot \ln r_{2}
}{w(X_1\cup X_2)}\right).\\
\end{aligned}
\end{equation}
The first and second equalities follow from the definitions of $h$ and $D_1$ (Definition \ref{def:bounding}). The last inequality follows from $w(X_1\cup X_2) + w(G)\leq W$. Define  two sets $H_{e^*}, H_R$ as follows. If $X_1= \{e^*\}$ then $H_{e^*}=\emptyset$ and $H_R=\{e^*\}$. If $X_1= R$ then $H_{e^*}= R$ and $H_R=\emptyset$. It follows that
\begin{equation}
\label{eq:H_first}
\begin{aligned}
w(X_1) \cdot \ln  r_{1} &+
w(X_2)\cdot \ln r_2 
=w(X_1) \cdot \ln  \frac{f_G(X_1)}{w(X_1)} +
w(X_2)\cdot \ln \frac{f_{G\cup X_1} (X_2)}{w(X_2)} \\ &=
w(e^*) \cdot \ln \frac{ f_{G\cup H_{e^*}} (e^*) }{w(e^*)}  + w(R)\cdot  \ln \frac{f_{G\cup H_R} (R)}{w(R)}.
\end{aligned}
\end{equation}
As the elements  $y_1,\ldots, y_m$ are ordered according to their  marginal values, we have that $f(y_1)\geq f_{y_1}(y_2)\geq f_{G}(e^*)\geq f_{G\cup H_{e^*}}(e^*)$. Therefore, $f(G)\geq 2\cdot f_{G\cup H_{e^*}}(e^*)$ and we have that
\begin{equation}
\label{eq:H_second}
f(Y)-f_{G\cup H_R}(R)=f(G)+f_{G\cup  H_{e^*}}(e^*)\geq 3 \cdot f_{G\cup  H_{e^*}}(e^*).
\end{equation}
By combining \eqref{eq:H_first}
 and \eqref{eq:H_second} we obtain the following.
 \begin{equation}
 \label{eq:H_third}
 \begin{aligned}
 w(X_1) \cdot \ln  r_{1} &+
 w(X_2)\cdot \ln r_2 \leq
w(e^*) \cdot \ln \frac{ f(Y)-f_{G\cup H_R} (R) }{3\cdot w(e^*)}  + w(R)\cdot  \ln \frac{f_{G\cup H_R} (R)}{w(R)}\\
&=-w(e^*)\cdot \ln 3 + w(e^*) \cdot \ln \frac{ f(Y)-f_{G\cup H_R} (R) }{ w(e^*)}  + w(R)\cdot  \ln \frac{f_{G\cup H_R} (R)}{w(R)} \\
&\leq -w(e^*)+ w(R\cup \{e^*\})\ln \frac{ f(Y)}{w(R\cup \{e^*\})}
= -w(e^*) +w(X_1 \cup X_2 ) \ln \frac{ f(Y)}{w(X_1\cup X_2)}
\end{aligned}
 \end{equation}
 The second inequality follows from the log-sum inequality (see, e.g, Theorem 2.7.1 in \cite{Co06}) and $\ln 3 >1$. 

Plugging \eqref{eq:H_third} into \eqref{eq:case1_initial} we obtain the next inequality.
\begin{equation*}
f(G)+f_G(A) \geq f(Y) - w(X_1\cup X_2)\cdot \exp\left(-1 +\ln\frac{f(Y)}{w(X_1\cup X_2)}\right)  
=f(Y) \cdot \left(1-e^{-1}\right).
\end{equation*}

\paragraph{Case 2: $W-w(G)-w(e^*)< D_1$ and $X_1=\{e^*\}$.}    We can use the assumption in this case to lower bound $\frac{f_G(X_1)}{f_{G\cup X_1} (X_2)}$ as follows.
$$
W-w(G)-w(X_1) < D_1=w(X_1)\cdot \ln \frac{r_1}{r_2} 
=w(X_1)\cdot \left(  
 \ln \frac{f_G(X_1)}{f_{G\cup X_1} (X_2)} +\ln \frac{w(X_2)}{w(X_1)}
 \right)
$$
By rearranging the terms we have
$$
\ln \frac{f_G(X_1)}{f_{G\cup X_1} (X_2)} >
\frac{ W-w(G)-w(X_1)} {w(X_1)}
 -\ln \frac{w(X_2)}{w(X_1)}.
$$
Thus,
\begin{equation}
\label{eq:case_2_f1_lb}
f_G(X_1) > {f_{G\cup X_1} (X_2)}  \cdot \frac{w(X_1)}{w(X_2)} 
\cdot \exp\left(
\frac{ W-w(G)-w(X_1)} {w(X_1)}
\right)\geq 
{f_{G\cup X_1} (X_2)}  \cdot \delta^{-1}
\cdot \exp\left(
\delta
\right),
\end{equation}
where $\delta =\frac{W-w(G)-w(X_1)}{w(X_1)}$ and the last inequality follows from $w(X_1)+w(X_2)+w(G)\leq W$. 

We use \eqref{eq:case_2_f1_lb} to lower bound $f(G\cup A)$.
\begin{equation*}
\begin{aligned}
f(G)&+f_G(A)\geq f(G)+ h (W-w(G)-w(X_1))\\
&
= f(G)+f_{G}(X_1)-f_G(X_1) \cdot\exp \left(-\delta \right) \\
&= \frac{2}{3}\left( f(G)+f_{G}(X_1)\right) + \frac{1}{3} \left( f(G)+f_{G}(X_1)\right)   -f_G(X_1) \cdot\exp \left(-\delta\right)\\
&\geq 
\frac{2}{3}\left( f(G)+f_{G}(X_1)\right) + f_{G}(X_1)  -f_G(X_1) \cdot\exp \left(-\delta \right) \\
&\geq 
\frac{2}{3}\left( f(G)+f_{G}(X_1)\right) + f_{G\cup X_1}(X_2)\left(1-\exp\left(-\delta\right)\right)\cdot \delta^{-1}\cdot \exp\left(\delta \right)\\
&\geq \frac{2}{3}\left( f(G)+f_G(X_1)+f_{G\cup X_1}(X_2)\right) \geq \left(1-e^{-1}\right)f(Y)
\end{aligned}
\end{equation*}
The second inequality follows from $f(G)=f(y_1)+f_{\{y_1\}}(y_2) \geq 2\cdot f_{G}(e^*)= 2 f_G(X_1)$ due to the ordering of elements in $Y$. The third inequality follows from \eqref{eq:case_2_f1_lb}. The forth inequality follows from 
$$\left(1-\exp(-\delta)\right) \cdot \delta^{-1} \cdot \exp(\delta)
%\geq 
%Changed to equality. Please check. --H
%Checked. Looks fine. Ariel.
= \left(\exp(\delta)-1\right)\cdot {\delta^{-1}}\geq 1\geq \frac{2}{3},$$
as $\exp(\delta)\geq 1+\delta$.

\paragraph{Case 3: $W-w(G)-w(e^*)< D_1$ and $X_1=R$.}
In this case, we have
\begin{equation*}
\begin{aligned}
f(G)&+f_G(A)\geq f(G)+ h (W-w(G)-w(e^*))\\
&= f(G)+f_{G\ }(R) - f_{G}(R) \cdot \exp\left( -\frac{W-w(G)-w(e^*)}{w(R)} \right) \\
&
\geq \frac{2}{3}\left(f(Y) - f_{G}(R)\right)+f_{G}(R) - f_{G}(R) \cdot \exp\left(-1\right) \\
&\geq \left( 1-e^{-1}\right)f(Y)
\end{aligned}
\end{equation*}
The second inequality follows from  $w(X_1)+w(X_2)+w(G)\leq W$, and $\frac{3}{2}f(G)\geq f(G)+f_{G\cup R}(\{e\})= f(Y)-f_G(R)$, as $G=\{y_1, y_2\}$.  

Thus, in all cases $f(A\cup G)=f(G)+f_G(A)\geq (1-e^{-1})f(Y)$. Hence,
in the iteration where $G=\{ y_1, y_2\}$ we have
 %following the specific  iteration it holds 
that $f(S^*)\geq \left(1-e^{-1}\right) f(Y)$.
\end{proof}
 Theorem \ref{thm:main} follows  from Lemma \ref{lem:enum2} and the observation that $\enum_2$ uses $O(n^4)$ oracle calls and arithmetic operations. It is natural to ask whether $\enum_1$ also yields a $(1-e^{-1})$-approximation. 
 Here, the answer is clearly negative. For any $N>0$, consider the MSK instance
 $I=(E,f,w,W)$, with $E=\{1,2,3\}$, $w(1)=w(2)=N$, $w(3)=1$, $W= 2N$ and $f(S)= | S\cap \{1,2\}| \cdot N + 2\cdot |S\cap\{3\}|$. While the optimal solution for the instance is $\{1,2\}$ with $f(\{1,2\})=2N$, $\enum_1(E,f,W,w)$ returns either $\{1,3\}$ or $\{2,3\}$ where $f({1,3})=f(\{2,3\})=N+1$. Already for $N=4$ the solution returned is not an $(1-e^{-1})$-approximation. We note that the function $f$ in this example is modular (linear).
 
 \section{Upper Bound on $\greedyplus$}
 \label{sec:plus}
 
%In this section we consider the common heuristic $\greedyplus$ \cite{} for MSK (see Algorithm \ref{alg:greedy_plus}). %and use our refined analysis to present upper bounds on its approximation guarantee.
 
%The pseudocode of $\greedyplus$ is given in Algorithm \ref{alg:greedy_plus}. 
In this section we consider the common heuristic $\greedyplus$ for MSK (see Algorithm \ref{alg:greedy_plus}).
We construct an MSK instance for which the approximation ratio of $\greedyplus$ is strictly smaller than $0.42945$, thus tightening the upper bound and almost matching the lower bound of~\cite{FNS20}.
The input instance was generated by numerical optimization 
using an alternative formulation of Lemma \ref{lem:bound}. The numerical optimization process was based on a grid search combined with quasi-convex optimization. As the grid search does not guarantee an optimal solution, 
applying the same approach with an improved numerical optimization may  
lead to a tighter bound.

 \begin{algorithm}[h]
 	\SetAlgoLined
 	%\SetKwInOut{Configuration}{Configuration}
 	\SetKwInOut{Input}{Input}
 	%\SetKwInOut{Configuration}{Configuration}
 	\Input{An MSK instance $(E,f,w,W)$}
 	%	\Configuration{Enumeration size $\kappa \in \mathbb{N}$.}
 	%\Configuration{The parameters $N,P, \delta$ and $\mu$.}
 	%{\bf Input:} 
 	%\Input{An SMKP instance $\II =(I,w,B,W,f)$, a parameter $\aeps>0$,  partition $(B_j)_{j=0}^\ell$ of the bins to blocks and subsets $(S_j)_{j=0}^\ell$ of $I$.}
 	\DontPrintSemicolon 	
 	$A\leftarrow \greedy(E,f, w, W)$ 
 	\label{plus:greedy}
 	\;
 	Let $e = \argmax_{e\in E:~w(e)\leq W} f(\{e\})$\;
 	If $f(A)\geq f(\{e\})$ return $A$, otherwise return $\{e\}$\;
 	\caption{$\greedyplus$} 
 	\label{alg:greedy_plus}
 \end{algorithm}

\begin{proof}[Proof of Theorem \ref{thm:gs_bound}]

Let $\mu$ be the Lebesgue measure on $\mathbb{R}$. That is, 
given $\II$, a union of intervals on the real-line,
$\mu(\II)$ is the ``length'' of $\II$. In particular, for $\II = [a,b]$,
$\mu(\II) = b-a$. 
%$\mu(S)$ is the ``length'' of $S$. In particular $\mu([a,b])=b-a$ for $a<b$.  
We define an MSK  instance  $(E,f, w,W)$ in which $E\subseteq 2^{\mathbb{R}}$, all the sets in $E$ are measurable, and  for any $S \subseteq E$, $f(S)= \mu(\bigcup_{I\in S} I)$. It is easy to verify  that $f$ is submodular and monotone.
Define $f_X= f_Y = 0.42943$, $f_Z= 1-f_X-f_Y=0.14114$, 
$X=[0, f_X]$, $Y=[1,1+f_Y]$ and $Z=[2,2+f_Z]$. Also, define $w(X)=w(Y)= 0.4584$, $w(Z)=1-w(X)-w(Y)= 0.0832$ and $W=1$.
 
The optimal solution for the instance  is $\{X,Y,Z\}$ with $f(\{X,Y,Z\})=1$  and $w(\{X,Y,Z\})=1=W$. In the following we add elements to the instance
%construct additional elements 
such that the greedy algorithm (invoked in Step \ref{plus:greedy} of Algorithm \ref{alg:greedy_plus}) selects the other elements rather than 
%additional elements and not 
$\{X,Y\}$. The greedy algorithm  first selects elements which are subsets of $Z$, then elements which are subsets of $X\cup Y\cup Z$, and finally it selects $Z$.

Define $\delta = 10^{-20}$ and $\rho= 0.62233$. 
There is 
 $0<\eps<\beta$ such that  
$k_1 = {\eps^{-1}\left(\mu(Z)- \frac{\mu(X)}{w(X)} w(Z) \right)}$  and 
 $k_2 = {\eps^{-1} \left( \frac{\mu(X)}{w(X) } - \rho\right) }+k_1$ are integral numbers,   
\begin{equation}
	\label{eq:eps_bound1}
	w(Z)\cdot \ln \left(\mu(Z) \cdot \frac{1 }{w(Z)\cdot \frac{\mu(X)}{w(X)}+\eps }\right) > 
	w(Z)\cdot \ln \left(\frac{\mu(Z) \cdot w(X) }{w(Z)\cdot \mu(X)}\right) - \delta
	%w(Z)\cdot \ln \left(\mu(Z) \cdot \frac{1 }{w(Z)\cdot \frac{\mu(X)}{w(X)} }\right) -\delta 
	\end{equation} and \begin{equation}
	\label{eq:eps_bound2}
	\ln \left( \frac{ \mu(X)}{\rho \cdot w(X) +\eps \cdot w(X) }\right)>\ln \left( \frac{ \mu(X)}{ \rho \cdot w(X) }\right) - \delta. 
	\end{equation}
  Let $(I_j)_{j=1}^{k_1}$ be a sequence of disjoint sets such that 
$\mu(I_j)=\eps$ and $I_j\subseteq Z$ for every $1\leq j \leq k_1$. 
Also, define $w(I_j )=\eps \cdot \frac{w(Z)}{\mu(Z)-\eps (j-1)} = \eps \cdot \frac{w(Z)}{ \mu(Z\setminus I_1 \setminus \ldots \setminus I_{j-1})}$ for $1\leq j\leq k_1$.
Such sets exist since
 $k_1 \cdot \eps = \mu(Z) -\frac{\mu(X)}{w(X) } \cdot w(Z) \approx 0.06319$ .
 
 Define $\mathcal{I}_1= \bigcup_{j=1}^{k_1} I_j$. It holds that,  $$w(\II_1)=\sum_{j=1}^{k_1} w(I_j)=\sum_{j=1}^{k_1} \frac{\eps \cdot w(Z)} {\mu(Z) -(j-1)\eps}
$$
which can be estimated by  integrals,
$$ \int_{1}^{k_1} \frac{\eps \cdot w(Z)}{\mu(Z) - (j-1)\eps} dj~\leq~
\sum_{j=1}^{k_1} \frac{\eps \cdot w(Z)} {\mu(Z) -(j-1)\eps}
~\leq~  \int_{1}^{k_1+1} \frac{\eps \cdot w(Z)}{\mu(Z) - (j-1)\eps} dj.
$$
%Furthermore,
%$$ \int_{0}^{x} \frac{1}{\mu(Z) - j\eps} dj= \left.\left( -\eps^{-1} \ln \left( \mu(Z)-j\eps\right)\right)\right|_{j=0}^{j=x}= \eps^{-1} \ln\left(\frac{\mu(Z)}{\mu(Z)-x\eps} \right).$$
By solving the integrals we obtain,
\begin{equation}
\label{eq:II1_upper}
w(\II_1)\leq w(Z)\cdot \ln \left( \frac{\mu(Z)}{\mu(Z)-k_1\eps}\right)=
w(Z)\cdot \ln \left(\frac{\mu(Z) \cdot w(X) }{w(Z)\cdot \mu(X)}\right)\approx 0.0494,
\end{equation}
and 
\begin{equation}
	\label{eq:II1_lower}
w(\II_1)\geq w(Z)\cdot \ln \left( \frac{\mu(Z)}{\mu(Z)-(k_1-1)\eps}\right)
> 
w(Z)\cdot \ln \left(\frac{\mu(Z) \cdot w(X) }{w(Z)\cdot \mu(X)}\right) - \delta,
\end{equation}
where the last inequality follows from the definition of $k_1$ and \eqref{eq:eps_bound1}. 
Furthermore, we have that   $\mu(Z\setminus \mathcal{I}_1) =\mu(Z)-\eps k_1= \frac{\mu(X)}{w(X)} \cdot w(Z)$.

 Let $(I_j)_{j=k_1+1}^{k_2}$ be disjoint sets such that $I_j\subseteq X\cup Y\cup Z \setminus \II_1$, $\mu(I_j\cap X) = \eps \cdot w(X)$, $\mu(I_j\cap Y)= \eps \cdot w(Y)$ and $\mu(I_j \cap Z) =\eps \cdot w(Z)$ for $k_1< j \leq k_2$. Thus, $\mu(I_j)=\eps$ for $k_1<j\leq k_2$.
  It holds that 
  $$ \sum_{j=k_1+1}^{k_2} \eps \cdot w(Y)= \sum_{j=k_1+1}^{k_2} \eps \cdot w(X)= (k_2 -k_1 )\cdot \eps \cdot w(X) = \mu(X)-\rho\cdot w(X)\approx 0.144< \mu (X)=\mu(Y).$$
Similarly,
$$\sum_{j=k_1+1}^{k_2} \eps \cdot w(Z) =\eps\cdot w(Z) \cdot (k_2-k_1) = w(Z) \cdot \frac{\mu(X)}{w(X)}-\rho\cdot w(Z) = \mu(Z\setminus  \II_1) -\rho \cdot w(Z).$$
Hence, the sets $(I_j)_{j=k_1+1}^{k_2}$ exist. 

For any $k_1<j\leq k_2$ define $$w(I_j) = \frac{\eps \cdot w(Z)}{\aeps(Z)-k_1 \cdot\eps - w(Z)\cdot\eps\cdot (j-1-k_1)} = \frac{\eps \cdot w(X)}{\mu(X) - \eps \cdot w(X) \cdot (j-1-k_1)}.$$ Also, let $\II_2 = \bigcup_{j=k_1+1}^{k_2} I_j$.
 It holds that $w(\II_2) = %\sum_{j=k_1+1}^{k_2} w(I_j) = 
 \sum_{j=k_1+1}^{k_2}  \frac{ \eps \cdot w(X)}{\mu(X) -\eps \cdot w(X)\cdot  (j-1-k_1)}$. Thus, 
$$ \int_{k_1+1}^{k_2} 
\frac{ \eps \cdot w(X)}{\mu(X)- \eps \cdot w(X)\cdot (j-1-k_1)}dj \leq w(\II_2) \leq 
\int_{k_1+1}^{k_2+1} 
\frac{ \eps \cdot w(X)}{\mu(X)- \eps \cdot w(X)  \cdot (j-1-k_1)}dj$$
and by solving the integrals and using \eqref{eq:eps_bound2} we have
\begin{equation} 
	\label{eq:II2_bounds}
 \ln \left( \frac{ \mu(X)}{ \rho\cdot  w(X) }\right)-\delta \leq \ln \left( \frac{ \mu(X)}{\rho\cdot w(X)+\eps \cdot w(X) }\right)\leq w(\II_2)\leq 
\ln \left( \frac{ \mu(X)}{ \rho\cdot  w(X) }\right).
\end{equation}
,

%  Finally, define $I_{k_2 +1} =  [4,4+\delta]\cup C\setminus \II_1 \setminus \II_2 $ with $w(I_{k_2+1})= 1-w(A) +\eps - \sum_{j=1}^{k_1+k_2}w(I_j)$, where $\delta $ is set such that $\frac{\mu(I_{k_2+1})}{w(I_{k_2+1})}= \frac{ \mu(C\setminus \mathcal{I}_1\setminus \mathcal{I}_2)}{w(C)}$. 

  We define $E= \{X,Y, Z, I_1,\ldots, I_{k_2}\}$. 
  	Consider the execution of Algorithm \ref{alg:greedy} on the MSK instance $(E,f,w,W)$ defined above.
  	Let $A_j= \{I_1,\ldots, I_j\}$ for $0\leq j \leq k_2$ ($A_0=\emptyset)$. 
  	For any $0\leq j \leq k_2$ and $j<\ell\leq k_2$ it holds that 
  	\begin{equation}
  		\label{eq:marginal_I}
  		\begin{aligned}
  	&\ell\leq k_1 &:~~~~~&\frac{f_{A_j}(I_\ell)}{w(I_{\ell})} = \frac{\mu(I_{\ell})}{w(I_{\ell})}  =
  	%{\eps}\cdot \left({ \frac{\eps \cdot w(Z)}{\mu(Z)-\eps (\ell-1)}}\right)^{-1}=
  	 \frac{\mu(Z)-\eps (\ell-1)}{w(Z)}\\
  	&k_1<\ell&:~~~~~&
  	\frac{f_{A_j}(I_\ell)}{w(I_{\ell})} = 
  	\frac{\mu(I_{\ell})}{w(I_{\ell})}=
  	%{\eps}\left({ \frac{\eps \cdot w(Z)}{\mu(Z)-\eps k_1 - w(Z) \cdot \eps (\ell-1-k_1)}}\right)^{-1}=
  	 \frac{\mu(Z)- k_1\cdot \eps } {w(Z)} - \eps (\ell-1-k_1)= \frac{\mu(X)}{w(X) }- \eps (\ell-1-k_1)
  	\end{aligned}
  	\end{equation}
  	Moreover, for any $0\leq j < k_1$ it holds that,
  	\begin{equation}
  		\label{eq:marginals_k1}
  	\begin{aligned}
  		\frac{f_{A_j}(X)}{w(X)}= \frac{ \mu\left(X\setminus \left( \bigcup_{\ell=1}^{j} I_\ell \right) \right)}{w(X)} = \frac{\mu(X)}{w(X)}\approx 0.9368\\
  		\frac{f_{A_j}(Y)}{w(Y)}= \frac{ \mu\left(Y\setminus \left( \bigcup_{\ell=1}^{j} I_\ell \right) \right)}{w(Y)} = \frac{\mu(Y)}{w(Y)}\approx 0.9368\\
  		\frac{f_{A_j}(Z)}{w(Z)}= \frac{ \mu\left(Z\setminus \left( \bigcup_{\ell=1}^{j} I_\ell \right)\right)}{w(Z)} = \frac{\mu(Z) - \eps \cdot j}{w(Z)  }
  	\end{aligned}
  	\end{equation}
  By \eqref{eq:marginal_I}, \eqref{eq:marginals_k1} and the definition of $k_1$, it holds that for any $1\leq j \leq k_1$  $\max_{e\in E} \frac{f_{A_{j-1}}(\{e\})}{w(e)}= \frac{f_{A_{j-1}} (I_j)}{w(I_j)}$. 
  
  Similarly, for any $k_1\leq j \leq k_2$ it holds that,
  \begin{equation}
  	\label{eq:marginals_k2}
  	\begin{aligned}
  		\frac{f_{A_j}(X)}{w(X)}= \frac{ \mu\left(X\setminus \left( \bigcup_{\ell=1}^{j} I_\ell \right) \right)}{w(X)} = \frac{\mu(X)}{w(X)} - \eps\cdot ( j-k_1)\\
  		\frac{f_{A_j}(Y)}{w(Y)}= \frac{ \mu\left(Y\setminus \left( \bigcup_{\ell=1}^{j} I_\ell \right) \right)}{w(Y)} = \frac{\mu(Y)}{w(Y)} - \eps\cdot ( j-k_1)\\
  	  \frac{f_{A_j}(Z)}{w(Z)}= \frac{ \mu\left(Z\setminus \left( \bigcup_{\ell=1}^{j} I_\ell \right) \right)}{w(Z)} = \frac{\mu(X)}{w(X) }-\eps\cdot (j-k_1)
  	\end{aligned}
  \end{equation}
  
   By \eqref{eq:marginal_I} and~\eqref{eq:marginals_k2}, it follows that  $\max_{e\in E} \frac{f_{A_{j-1}}(\{e\})}{w(e)}= \frac{f_{A_{j-1}} (I_j)}{w(I_j)}$ for any  any $k_1< j \leq k_2$. Hence, we conclude that Algorithm \ref{alg:greedy} (when invoked from Step \ref{plus:greedy}) selects ${I_1,\ldots, I_{k_2}}$ in the first $k_2$ iterations, and subsequently selects $Z$. It holds that
   \begin{equation*}
   \begin{aligned}
   w&(\{I_1,\ldots, I_{k_2}, Z\})=
   w(\II_1)+w(\II_2) + w(Z) \\
   &\leq w(Z)+ w(Z)\cdot \ln \left(\frac{\mu(Z) \cdot w(X) }{w(Z)\cdot \mu(X)}\right)+ \ln \left( \frac{ \mu(X)}{ \rho\cdot  w(X) }\right)<1
   \end{aligned}
\end{equation*}
   where the first inequality is by \eqref{eq:II1_upper} and 
   \eqref{eq:II2_bounds}. Hence, the algorithm also adds the elements of $\{I_1,\ldots, I_{k_2}, Z\}$ to the solution in Step \ref{greedy:selected}.
   Denote $A= \{I_1,\ldots, I_{k_2}, Z\}$.
    It holds that,
   \begin{equation*}
   	\begin{aligned}
   		w(A) &=
   		w(\II_1)+w(\II_2)+w(Z) \\
   	%	&\geq 
   	%	w(Z)+ 
   	%	w(Z)\cdot \ln \left(\mu(Z) \cdot \frac{1 }{w(Z)\cdot \frac{\mu(X)}{w(X)}-\eps }\right)
   	%	+
   	%	\ln \left( \frac{ \mu(X)}{\mu(X) - (\rho-\eps) \cdot w(X) }\right)\\
   		&>
   		w(Z)+ 
   		w(Z)\cdot \ln \left(\mu(Z) \cdot \frac{\mu(Z)\cdot w(X) }{w(Z)\mu(X)}\right)
   		+
   	\ln \left( \frac{ \mu(X)}{ \rho\cdot  w(X) }\right)-2\delta \\
   		&>1-w(X)= 1-w(Y)
   	\end{aligned}
   \end{equation*}
   where the first inequality is due to  \eqref{eq:II1_lower} and \eqref{eq:II2_bounds}, and the second is a simple numerical inequality. Hence, Algorithm \ref{alg:greedy} cannot add any other element to the solution $A$ without violating the knapsack constraint. Thus, $A$ is indeed the solution output after the execution of Step \ref{plus:greedy} in Algorithm \ref{alg:greedy_plus}.
    
    It also holds that
    $$f(A)= \mu(Z) + \eps \cdot (k_2-k_1)(w(X)+w(Y)) =f_Z -2\rho W(X)+ 2f_X =1-2\rho \cdot w(X)< 0.42945=\beta.$$
    Also, $f(\{e\}) < \beta$ for any $e\in E$. Hence, Algorithm \ref{alg:greedy_plus} returns a set $S$ such that $f(S)<\beta$. It follows that the approximation ratio of $\greedyplus$ is bounded by $\beta$. 
    
   \comment{
  	\begin{claim}
  	For $1\leq j \leq k_2$, if at the beginning of an iteration of the loop in Step \ref{greedy:loop} of Algorithm \ref{alg:greedy} it holds that $A=\{I_1,\ldots , I_{j-1}\}$  ($A=\emptyset $ if $j=1$) and $E'=E\setminus =\{I_1,\ldots , I_{j-1}\}$ then by the end of the iteration it holds that $A = \{I_1,\ldots , I_{j}\}$ and $E'=E\setminus  \{I_1,\ldots , I_{j}\}$ 
  	\end{claim}
 
  \begin{proof}
 
Let $E'= E\setminus \{I_1,\ldots , I_{j-1}\}$  and $A=\{I_1,\ldots , I_{j-1}\}$ be the value of the variables $E'$ and $A$ at the beginning of the iteration. We consider the following two cases:
 \begin{itemize}
 	\item $j\leq k_1$. By definition it holds that $I_1,\ldots, I_{j-1}$ and $X$ are disjoint. Thus, 
 	 $$\frac{f_A(X)}{w(X)}= \frac{ \mu\left(X\setminus \left( \bigcup_{\ell=1}^{j-1} I_\ell \right) \right)}{w(X)} = \frac{\mu(X)}{w(X)}.$$
 	 It can be similarly shown that 
 	 $\frac{f_A(Y)}{w(T)}=  \frac{\mu(Y)}{w(Y)} = \frac{\mu(X)}{w(X)}$.  Since $\bigcup_{\ell=1}^{j-1} I_\ell\subseteq Z$ it follows that 
 	 $$
 	 \frac{f_A(Z)}{w(Z)}= \frac{ \mu\left(Z\setminus \left( \bigcup_{\ell=1}^{j-1} I_\ell \right)\right)}{w(Z)} = \frac{\mu(Z) - \eps \cdot (j-1)}{w(Z)  }> \frac{\mu(X)}{w(X)}.
 	 $$
 	 where the last inequality follows from $j
 	\leq k_1$ and the definition of $k_1$.
 	For any $j\leq \ell\leq k_1$ it holds that 
 	$$
 	\frac{f_A(I_\ell)}{w(I_{\ell})} = {\eps}\cdot \left({ \frac{\eps \cdot w(Z)}{\mu(Z)-\eps (\ell-1)}}\right)^{-1}= \frac{\mu(Z)-\eps (\ell-1)}{w(Z)}\leq \frac{f_A(Z)}{w(Z)}
 		$$
 		and for every $ k_1 < \ell \leq k_2$ it holds that  
 		$$
 		\frac{f_A(I_\ell)}{w(I_{\ell})} = {\eps}\left({ \frac{\eps \cdot w(Z)}{\mu(Z)-\eps k_1 - w(Z) \cdot \eps (\ell-1-k_1)}}\right)^{-1}= \frac{\mu(Z)- k_1\cdot \eps -w(Z)\cdot \eps (\ell-1-k_1)}{w(Z)}\leq \frac{f_A(Z)}{w(Z)}
 		$$
 		Thus, for any $e\in E'$ it holds that $\frac{f_A(\{e\})}{w(E)}\leq \frac{f_A(Z)}{w(Z) } = \frac{f_A(I_j)}{w(I_j)}$. Thus, $I_j$ is the item found in Step \ref{greedy:considered} of the algorithm . We also note that $w(\{I_1,\ldots, I_j\})\leq w(\II_1)\leq w(Z)\cdot \ln \left(\frac{\mu(Z) \cdot w(X) }{w(Z)\cdot \mu(X)}\right)<1$ (the last inequality follows from \eqref{eq:II1_upper}) Thus $I_j$ is  added to $A$ in Step \ref{greedy:selected}.
 		
 		  \item $k_1<j$. Recall that $X$ and $I_1,\ldots I_{k_1}$ are disjoint as well as  $\mu(X\cap I_{\ell}) = \eps \cdot w(X)$ for every $k_1<\ell <j$. Therefore, 
 		  $$\frac{f_A(X)}{w(X)}= \frac{ \mu\left(X\setminus \left( \bigcup_{\ell=1}^{j-1} I_\ell \right) \right)}{w(X)} = \frac{\mu(X)-\eps \cdot (j-1-k_1)\cdot w(X)}{w(X)}= \frac{\mu(X)}{w(X)} - \eps( j-1-k_1).$$
 		  It can be similarly shown that 
 		  $\frac{f_A(Y)}{w(T)}=  \frac{\mu(Y)}{w(Y)} -\eps\cdot  (j-1-k_1)= \frac{\mu(X)}{w(X)} -\eps\cdot  (j-1-k_1)$.  As $\bigcup_{\ell=1}^{j-1} I_\ell\subseteq Z$ and $\mu(I_\ell \cap Z )= \eps \cdot w(Z)$ for $k_1<\ell <j$ it follows that 
 		  $$
 		  \frac{f_A(Z)}{w(Z)}= \frac{ \mu\left(Z\setminus \left( \bigcup_{\ell=1}^{j-1} I_\ell \right) \right)}{w(Z)} = \frac{\mu(Z) - \eps \cdot k_1 - \eps \cdot (j-1-k_1)\cdot w(Z)}{w(Z)  }
 		  	=\frac{\mu(X)}{w(X) }-\eps\cdot (j-1-k_1)= \frac{f_A(X)}{w(X)}
 		  $$
 		  
 		 For every $ k_1 < \ell \leq k_2$ it holds that  
 		  $$
 		  \frac{f_A(I_\ell)}{w(I_{\ell})} = \frac{\eps}{ \frac{\eps \cdot w(Z)}{\mu(Z)-\eps k_1 - w(Z) \cdot \eps (\ell-1-k_1)}}= \frac{\mu(Z)- k_1\cdot \eps -w(Z)\cdot \eps (\ell-1-k_1)}{w(Z)}\leq \frac{f_A(Z)}{w(Z)}
 		  $$
 		  It can be further noted that $\frac{f_A(I_j)}{w(I_{\ell})} =\frac{f_A(Z)}{w(Z)}$.  
 		  Thus, for any $e\in E'$ it holds that $\frac{f_A(\{e\})}{w(E)}\leq \frac{f_A(Z)}{w(Z) } = \frac{f_A(I_j)}{w(I_j)}$. Thus, $I_j$ is the item found in Step \ref{greedy:considered} of the algorithm . We also note that $$w(\{I_1,\ldots, I_j\})\leq w(\II_1)+w(\II_2)\leq w(Z)\cdot \ln \left(\frac{\mu(Z) \cdot w(X) }{w(Z)\cdot \mu(X)}\right)+ \ln \left( \frac{ \mu(X)}{\mu(X) - \rho\cdot  w(X) }\right)<1.$$ Thus $I_j$ is also added to $A$ in Step \ref{greedy:selected}.
 \end{itemize}
  \end{proof}
  
  By the above claim it can be shown that after $k_2$ iteration of Algorithm \ref{alg:greedy} it holds that $A=\{I_1,\ldots ,I_{k_2}\}$ are 
   selected for the solution, for $1\leq j \leq k_1+k_2$.  In the subsequent iteration it holds that 
    $$\frac{f_A(X)}{w(X)}= \frac{ \mu\left(X\setminus \left( \bigcup_{\ell=1}^{k_2} I_\ell \right) \right)}{w(X)} = \frac{\mu(X)-\eps \cdot (j-1-k_1)\cdot w(X)}{w(X)}= \frac{\mu(X)}{w(X)} - \eps( k_2-k_1).$$
   It can be similarly shown that 
   $\frac{f_A(Y)}{w(Y)}=  \frac{\mu(Y)}{w(Y)} -\eps\cdot  (k_2-k_1)= \frac{\mu(X)}{w(X)} -\eps\cdot  (k_2-k_1)$.  Furthermore,
   $$
   \frac{f_A(Z)}{w(Z)}= \frac{ \mu\left(Z\setminus \left( \bigcup_{\ell=1}^{k_2} I_\ell \right) \right)}{w(Z)} = \frac{\mu(Z) - \eps \cdot k_1 - \eps \cdot (k_2-k_1)\cdot w(Z)}{w(Z)  }
   =\frac{\mu(X)}{w(X) }-\eps\cdot (k_2-k_1)= \frac{f_A(X)}{w(X)}.
   $$
   Thus, we can assume that the element found in Step \ref{greedy:considered} is $Z$. It also holds that $$w(A\cup Z) = w(\II_1)+w(\II_2)+w(Z) \leq 
   w(Z)\cdot \ln \left(\frac{\mu(Z) \cdot w(X) }{w(Z)\cdot \mu(X)}\right)+ \ln \left( \frac{ \mu(X)}{\mu(X) - \rho\cdot  w(X) }\right)
   +w(Z)<1.
   $$
   
   Thus, it the $k_2+1$ iteration $Z$ is added to the solution. Thus $A=\{I_1,\ldots, I_{k_2}, Z\}$ and $E'=\{X,Y\}$.
   It holds that,
   \begin{equation*}
   	\begin{aligned}
   	w(A) &=
   w(\II_1)+w(\II_2)+w(Z) \\
   &\geq 
   w(Z)+ 
   w(Z)\cdot \ln \left(\mu(Z) \cdot \frac{1 }{w(Z)\cdot \frac{\mu(X)}{w(X)}-\eps }\right)
   +
   \ln \left( \frac{ \mu(X)}{\mu(X) - (\rho-\eps) \cdot w(X) }\right)\\
   &>
    w(Z)+ 
   w(Z)\cdot \ln \left(\mu(Z) \cdot \frac{1 }{w(Z)\cdot \frac{\mu(X)}{w(X)}}\right)
   +
   \ln \left( \frac{ \mu(X)}{\mu(X) - \rho \cdot w(X) }\right)-2\delta \\
   &>1-w(X)= 1-w(Y)
   \end{aligned}
   \end{equation*}
Hence neither $X$ nor $Y$ can be added to $A$ without violating the knapsack constraint. Hence the greedy algorithm returns $R=\{I_1,\ldots ,I_{k_2},Z \}$. It holds that 
$$f(R)= \mu(Z) + \eps \cdot (k_2-k_1)(w(X)+w(Y)) =f_Z+ -2\rho W(X)+ 2f_X =1-2\rho \cdot w(X)< 0.42945.$$
Thus, Greedy+Singleton either returns the solution of the greedy, or a single element. In both cases the approximation ratio is strictly smaller than $\beta = 0.42945$. 
}
\end{proof}
 
\section{Acknowledgments}
 Funding: This work has received funding from the European Union’s Horizon 2020 research and innovation program under grant agreement no. 852870-ERC-SUBMODULAR.

We would like to thank Moran Feldman for stimulating discussions and
for pointing us
to relevant literature.

\bibliographystyle{plain}% the mandatory bibstyle
\bibliography{subknap}
\appendix
\section{Proof of Lemma \ref{lem:comparison}}
\label{sec:app}

To prove Lemma \ref{lem:comparison} we first show a simpler claim.
\begin{lemma}
	\label{lem:comparison_aux}
Within the settings of Lemma \ref{lem:comparison}, if $ \vartheta_1(c_r)\geq \vartheta_2(c_r) $ for some $r\in [s]$ then $\vartheta_1(u)\geq \vartheta_2(u)$ for any $u\in [c_r,c_{r+1}]$. 
\end{lemma}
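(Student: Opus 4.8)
The plan is to establish Lemma \ref{lem:comparison_aux} by a standard "differential inequality" argument on a single subinterval $[c_r,c_{r+1}]$, and then to chain these together to obtain Lemma \ref{lem:comparison} itself. For the single-interval claim, fix $r$ with $\vartheta_1(c_r)\ge\vartheta_2(c_r)$ and set $g(u)=\vartheta_1(u)-\vartheta_2(u)$, which is continuous on $[c_r,c_{r+1}]$ and differentiable on the open interval. On $(c_r,c_{r+1})$ we have, using that $\varphi_r$ is positively linear in the second dimension with constant $K>0$,
\[
g'(u)=\vartheta_1'(u)-\vartheta_2'(u)\ge \varphi_r(u,\vartheta_1(u))-\varphi_r(u,\vartheta_2(u))=K\cdot\big(\vartheta_1(u)-\vartheta_2(u)\big)=K\cdot g(u).
\]
So $g$ satisfies the differential inequality $g'\ge Kg$ with $g(c_r)\ge 0$, and I want to conclude $g\ge 0$ on the whole interval.

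The cleanest way to extract nonnegativity from $g'\ge Kg$ is the integrating-factor trick: consider $\psi(u)=e^{-K(u-c_r)}g(u)$. Then $\psi$ is continuous on $[c_r,c_{r+1}]$, differentiable on the interior, and $\psi'(u)=e^{-K(u-c_r)}\big(g'(u)-Kg(u)\big)\ge 0$ there. Hence $\psi$ is nondecreasing on $[c_r,c_{r+1}]$ (a continuous function with nonnegative derivative on the interior is nondecreasing, by the mean value theorem), so $\psi(u)\ge\psi(c_r)=g(c_r)\ge 0$ for all $u\in[c_r,c_{r+1}]$. Since $e^{-K(u-c_r)}>0$, this gives $g(u)\ge 0$, i.e. $\vartheta_1(u)\ge\vartheta_2(u)$ on $[c_r,c_{r+1}]$, which is exactly Lemma \ref{lem:comparison_aux}.

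For Lemma \ref{lem:comparison} itself, I would induct on $r$: the hypothesis gives $\vartheta_1(a)=\vartheta_1(c_1)\ge\vartheta_2(c_1)=\vartheta_2(a)$; applying Lemma \ref{lem:comparison_aux} with $r=1$ yields $\vartheta_1\ge\vartheta_2$ on $[c_1,c_2]$, in particular at the right endpoint $c_2$; then apply the lemma again with $r=2$ to cover $[c_2,c_3]$, and so on. After $s$ steps every subinterval is covered, hence $\vartheta_1(u)\ge\vartheta_2(u)$ for all $u\in[a,b]$. (A minor bookkeeping point is the degenerate case $c_r=c_{r+1}$, where the claim is vacuous and the induction passes through trivially.)

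There is no real obstacle here; the only place demanding a little care is justifying that $\psi$ is nondecreasing despite $\psi'$ being defined only on the open interval — this is handled by continuity of $\psi$ on the closed interval together with the mean value theorem applied to any subinterval $[u_1,u_2]\subseteq(c_r,c_{r+1})$ and then passing to the limit at the endpoints. I would state this as a one-line remark rather than belabor it.
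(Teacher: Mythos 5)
Your proof is correct, and it takes a route that is meaningfully different from the paper's. The paper works with $\delta = \vartheta_2 - \vartheta_1$, derives the integral inequality $\delta'(u) \leq K_r\left|\int_{c_r}^u \delta'(z)\,dz\right|$ (using $\delta(c_r)\leq 0$ to drop the constant term), and then closes the argument by citing Gronwall's inequality as a black box. You instead observe that the same one-sided estimate $g'\geq Kg$ with $g=\vartheta_1-\vartheta_2$ can be settled directly via the integrating factor $\psi(u)=e^{-K(u-c_r)}g(u)$: the exponential damps the growth exactly enough to make $\psi'\geq 0$ on the interior, and then the conclusion $g\geq 0$ falls out of monotonicity of $\psi$ and positivity of the exponential. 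This is in effect an unrolled, self-contained proof of the constant-coefficient case of Gronwall's lemma, so it buys you a fully elementary argument at the cost of a few extra lines; the paper's version is shorter but leans on an external reference. One small remark: the appeal to the mean value theorem to show $\psi$ is nondecreasing on the closed interval actually needs no limiting argument at the endpoints — the MVT on $[u_1,u_2]\subseteq[c_r,c_{r+1}]$ only requires differentiability on the open interval $(u_1,u_2)\subseteq(c_r,c_{r+1})$, which you have, together with continuity on the closed interval, so it applies directly even when $u_1=c_r$ or $u_2=c_{r+1}$.
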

\begin{proof}
If $c_r =c_{r+1}$ the claim trivially holds. Therefore we may assume that $c_r<c_{r+1}$. 
Define $\delta:[c_r, c_{r+1}]\rightarrow \mathbb{R}$ by $\delta(u)= \vartheta_2(u)-\vartheta_1(u)$, and let $\delta'=\vartheta_2'-\vartheta_1'$ be its derivative. We note that $\vartheta_2'$ and $\vartheta_1'$ are continuous in $(c_r, c_{r+1} )$ and hence $\delta'$ is continuous and integrable on   $(c_r, c_{r+1} )$. Thus, for any $u\in (c_r, c_{r+1})$ it holds that
\begin{equation}
\label{eq:integral}
\delta(u)=  \lim_{t\searrow c_r}\delta(u)-\delta(t) +\delta(c_r) = \lim_{t\searrow c_r} \int_{t}^{u}\delta'(z)dz +\delta(c_r) =
\delta(c_r)+\int_{c_r}^{u} \delta'(z)dz
\end{equation}
where the first equality holds since $\delta$ is continuous.
Furthermore, as $\varphi_r$ is positively linear in the second dimension there is $K_r>0$ such that $\varphi_r(u,t_1)-\varphi_r(u,t_2)= K_r \cdot  (t_1-t_2)$ for any $u\in (c_r,c_{r+1})$ and $t_1, t_2\in \mathbb{R}$. 
Hence, for any $u\in (c_r,c_{r+1})$, it holds that
\begin{equation*}
\begin{aligned}
\delta'(u) &=\vartheta_2'(u)-\vartheta_1'(u) \leq \varphi_r(u,\vartheta_2(u))- \varphi_r(u, \vartheta_1(u)) = K_r \cdot \left(
\vartheta_2(u)-\vartheta_1(u) \right)\\
& = K_r \left( \delta(c_r)+ \int_{c_r}^{u} \delta'(z)dz\right)\leq
K_r\cdot {\left|\int_{c_r}^{u} \delta'(z)dz\right|}.
\end{aligned}
\end{equation*}
The last equality follows from \eqref{eq:integral} and the  inequality holds since $\delta(c_r) = \vartheta_2(c_r)-\vartheta_1(c_r)\leq 0$.
By Gronwall's inequality (see, e.g., Lemma 3.3 in \cite{Si13}), it follows that $\delta(u)\leq 0$ for any $u\in (c_r, c_{r+1})$, and therefore $\vartheta_1(u)\geq \vartheta_2(u)$ for any $u\in(c_r, c_{r+1})$. As $\delta$ is continuous,
$
 \delta(c_{r+1}) =
\lim_{u\nearrow c_{r+1} } \delta(u) \leq 0
$; thus, $\vartheta_1(c_{r+1})\geq \vartheta_2(c_{r+1})$ as well.
\end{proof}
%%\ariel{There is a differential version of Gronwall inequality in Wikipedia. The proof is a bit simpler when using it instead of the integral variant. However, I did not find a book in which the differential version is given. I found several differential comparison theorems, however, non of them stated the exact result I wanted. Most of the problem are since $\delta$ is not differentiable in $c_r$, which is no more that a corner case. I can also replace the requirement that $\varphi_r$ is linear in the second dimension to  Lipschitz in the second dimension- there is $K>0$ such that $|\varphi(u,t_1)-\varphi(u,t_2)|\leq K |t_1-t_2|$. However, this will make both the paper and the proof more complicated. Mathematically it makes more sense.}	

%%I would keep the current proof. --H	

\begin{proof}[Proof of Lemma \ref{lem:comparison}]
	
The lemma essentially follows immediately form Lemma \ref{lem:comparison_aux} using an inductive claim. We will prove by induction on $r\in [s+1]$ that $\vartheta_1(u)\geq \vartheta_2(u)$ for any $u\in [a,c_r]$. For $r=1$ the claim holds since $c_1=a$ and $\vartheta_1(a)\geq \vartheta_2(a)$. 
Let $r>1$ and assume $\vartheta_1(u)\geq \vartheta_2(u)$ for any $u\in [a,c_r] $. Then, by Lemma \ref{lem:comparison_aux}, $\vartheta_1(u)\geq \vartheta_2(u)$ for any $u\in [c_r,c_{r+1}]$ as well. That is, the claim holds for $r+1$. 

Taking $r=s+1$ we have that $\vartheta_1(u)\geq \vartheta_2(u)$ for any $u\in [a,c_{s+1}]=[a,b]$
\end{proof}
\end{document}